\definecolor{darkgreen}{rgb}{0,0.4,0}
\definecolor{darkgreen}{rgb}{0,0.4,0}
\newtheorem{remark}{\it Remark}
\newtheorem{definition}{Definition}
\newtheorem{assumption}{Assumption}
\newtheorem{corollary}{Corollary}
\newtheorem{proposition}{Proposition}
\newtheorem{example}{Example}
\def\begquo{\begin{quote}}
\def\endquo{\end{quote}}
\def\begequarr{\begin{eqnarray}}
\def\endequarr{\end{eqnarray}}
\def\begequarrs{\begin{eqnarray*}}
\def\endequarrs{\end{eqnarray*}}
\def\begarr{\begin{array}}
\def\endarr{\end{array}}
\def\begequ{\begin{equation}}
\def\endequ{\end{equation}}
\def\begdes{\begin{description}}
\def\enddes{\end{description}}
\def\begenu{\begin{enumerate}}
\def\begite{\begin{itemize}}
\def\endite{\end{itemize}}
\def\endenu{\end{enumerate}}
\def\lef[{\left[\begin{array}}
\def\rig]{\end{array}\right]}
\def\begcen{\begin{center}}
\def\endcen{\end{center}}
\def\begrem{\begin{remark}\rm}
\def\endrem{\end{remark}}
\def\begdef{\begin{definition}}
\def\enddef{\end{definition}}
\def\begpro{\begin{proposition}}
\def\endpro{\end{proposition}}
\def\begfac{\begin{fact}}
\def\endfac{\end{fact}}
\def\begass{\begin{assumption}}
\def\endass{\end{assumption}}
\def\begmat#1{\begin{bmatrix}#1\end{bmatrix}}
\def\cale{{\cal E}}
\def\cali{{\cal I}}
\def\calh{{\cal H}}
\def\calc{{\cal C}}
\def\calr{{\cal R}}
\def\cale{{\cal E}}
\def\call{{\cal L}}
\def\coloneqq{:=}
\def\bw{{\bar W}}
\def\L2e{{\cal L}_{2e}}
\def\rea{\mathbb{R}}
\def\diag{\mbox{diag}}
\def\col{\mbox{col}}
\def\hal{{1 \over 2}}
\def\diag{\mbox{diag}}
\def\min{{\mbox{min}}}
\def\call{{\cal L}}
\def\calc{{\cal C}}
\def\cale{{\cal E}}
\def\calr{{\cal R}}
\def\hal{{1 \over 2}}
\def\col{\mbox{col}}
\def\L2{{\cal L}_2}
\def\L2e{{\cal L}_{2e}}
\def\rea{\mathbb{R}}
\def\diag{\mbox{diag}}
\def\begequarr{\begin{eqnarray}}
\def\endequarr{\end{eqnarray}}
\def\begequarrs{\begin{eqnarray*}}
\def\endequarrs{\end{eqnarray*}}
\def\begarr{\begin{array}}
\def\endarr{\end{array}}
\def\begequ{\begin{equation}}
\def\endequ{\end{equation}}
\def\begdes{\begin{description}}
\def\enddes{\end{description}}
\def\begenu{\begin{enumerate}}
\def\begite{\begin{itemize}}
\def\endite{\end{itemize}}
\def\endenu{\end{enumerate}}
\def\lef[{\left[\begin{array}}
\def\rig]{\end{array}\right]}
\def\begcen{\begin{center}}
\def\endcen{\end{center}}
\def\TAC{{\it IEEE Trans. on Automatic Control}}
\def\CDC{{\it IEEE Conf. on Decision and Control}}
\def\SCL{{\it Systems \& Control Letters}}
\def\AUT{{\it Automatica}}
\begin{document}

\title{\huge On necessary conditions of tracking control for nonlinear systems via contraction analysis}
\setcounter{footnote}{-1}

\author{Bowen Yi, Ruigang Wang and Ian R. Manchester\footnote{The authors are with Australian Centre for Field Robotics \& Sydney Institute for Robotics and Intelligent Systems, The University of Sydney, Sydney, NSW 2006, Australia (\texttt{bowen.yi@sydney.edu.au})} }

\newcommand{\Addresses}{{

}}

\providecommand{\keywords}[1]{{\textsc{Index terms---}} #1}
\date{\today \\ \vspace{0.2cm}
presented at the 59th IEEE Conference on Decision and Control, 2020}

\maketitle

\abstract{ In this paper we address the problem of tracking control of nonlinear systems via contraction analysis. The necessary conditions of the systems which can achieve universal asymptotic tracking are studied under several different cases. We show the links to the well developed control contraction metric, as well as its invariance under dynamic extension. In terms of these conditions, we identify a differentially detectable output, based on which a simple differential controller for trajectory tracking is designed via damping injection. As illustration we apply to electrostatic microactuators.}

\vspace{0.3cm}

\keywords{nonlinear systems, tracking, contraction analysis.}

%
\section{Introduction}
\label{sec1}
%
Trajectory tracking is one of the most important objectives in the area of motion control, particularly for autonomous systems, robotics and electromechanical systems, which is concerned with designing a feedback law to make the given system asymptotically follow a time parameterized path. The {\em de facto} standard technical route of tracking control is to translate it into a stabilization problem by defining an error dynamics, and then regulating to zero the induced nonlinear systems, with many nonlinear control techniques applicable. However, it brings the challenges to analyze nonlinear \emph{time-varying} systems.

An alternative route is to study control systems differentially along their solutions, which is widely known as contraction or incremental stability analysis \cite{ANDetal,FORSEP,LOHSLO,ANG}. The basic results on contraction analysis can be tracked back to the field of differential equations, see for example \cite{DEM,LEW}. It allows us to study the evolution of nearby trajectories to each other from an auxiliary linearized dynamics, the stability of which can be characterized by Finsler-Lyapunov functions with an elegant geometric interpretation \cite{FORSEP}. Nevertheless, most works on contraction theory are devoted to systems analysis, and in each case the corresponding constructive solutions are rarely discussed, with notable exception \cite{MANSLO,MANSLOcsl}. In \cite{MANSLO} control contraction metric (CCM) is introduced as a sufficient condition for exponential stabilizability of {\em all feasible trajectories} of given nonlinear systems, the notion of which resembles control Lyapunov functions (CLFs) for asymptotic regulation of nonlinear systems. The obtained controller based on CCM enjoys the benefit that the key constructive procedure can be formulated as off-line convex optimization. In \cite{CHAMAN}, the CCM method is extended to Finsler manifolds. On the application side the CCM technique has provided solutions to a wide variety of physical systems, see \cite{BAZSTE,SINetal} for applications to human manipulation and motion planning. 

In this paper, we present some further results on asymptotic tracking in the context of contraction analysis. The main contributions are twofold.

\begin{itemize}
\item[1)] Similarly to CLF which is necessary for asymptotic controllability, the CCM is also a necessary condition of universal asymptotic tracking for general nonlinear systems. We also consider the cases of robust tracking and with dynamic extension, and show that CCMs are invariant under dynamic extension.

\vspace{0.1cm}

\item[2)] Motivated by the proposed necessary conditions, we provide a simple differential controller design, {\em i.e.}, injecting damping along an elaborated differentially passive output. The design is smooth globally, unlike the one in \cite{MANSLO} excluding a zero-Lebesgue set.
\end{itemize}

The paper is organized as follows. In Section \ref{sec:2}, we give the problem formulation and some preliminaries on differential dynamics. Section \ref{sec:3} presents the main results of the paper on necessary conditions of universal asymptotic tracking from several perspectives. Based on them, we discuss the tracking controller design in Section \ref{sec:4}. Some examples are given in Section \ref{sec:5}, and then the paper is wrapped up by some concluding remarks.

\textbf{Notations.} All mappings are assumed smooth. For full-rank matrix $g\in \rea^{n\times m}$ ($m<n$), we denote the generalized inverse as $g^\dagger  = [g^\top g]^{-1} g^\top$ and $g_\bot$ a full-rank left-annihilator. Given a matrix $M(x)$, a function $V(x)$ and a vector field $f(x)$ with proper dimensions, we define the directional derivative as $\partial_f M(x)= \sum_{i}{\partial M(x) \over \partial x_i} f_i(x)$ and $L_f V$ as the Lie derivative of $V$. For a square matrix $A$, $\texttt{sym}\{A\}$ represents its symmetric part $(A+A^\top)/2$.

\section{Preliminary}
\label{sec:2}

Consider the nonlinear control system
\begin{equation}
\label{eq:NL}
\dot{x} = f(x) + B(x) u, 
\end{equation}
with states $x \in \rea^n$, input $u \in \rea^m$ and $x(0) =x_0$, where the input matrix $B(x) \in \rea^{n \times m}$ ($n>m$) is full rank. We denote its solution as $x(t) = X(t,x_0,u)$. The control target is to track a predefined trajectory $x_d(t)$ generated by 
\begin{equation}
\label{eq:target}
\dot{x}_d = f(x_d) + B(x_d) u_d(t),  \qquad x_d(0) = x_{d0}
\end{equation}
with input $u_d(t)$. Following standard practice in tracking control, we assume that the system \eqref{eq:target} is forward complete and define the feasible input set as
$
\cale_{x_{d0}} := \left\{
 u_d \in \call_\infty^m \cap \calc^1 | \exists
X(t,x_{d0},u_d),~ \forall t\ge 0
\right\}
$
for a given $x_{d0}$. To streamline the presentation, we recall some definitions first.

\begin{definition}\rm\cite{ANG}
Consider the system \eqref{eq:NL} under the control $u = \alpha(x,t)$, the solution of which is forward invariant in $\cale \subset \rea^n$. The closed-loop system in $\cale$ is

\noindent {(IAS)} incrementally asymptotically stable (or asymptotically contracting) if $\forall(x_1,x_2) \in \cale$,
$$
|X(t,x_1,\alpha(x_1,t)) - X(t,x_2,\alpha(x_2,t))| \le \kappa(|x_1-x_2|,t)
$$
holds for any $t\ge 0$ and some function $\kappa$ of class $\mathcal{K}\call$.

\noindent {(IES)} incrementally exponentially stable (or contracting) if the system is IAS with $\kappa(a,t) = k_1 e^{-k_2 t} a$ for some constants $k_1,k_2 >0$.
\end{definition}

\begin{definition}\rm
\label{def:lya}
For the system $\dot x = F(x,t)$, a function $V: \cale\times \cale \to \rea_+$ is called IAS (or IES) Lyapunov function if 
    \begin{equation}
    \label{cond:1}
        L_{F(x,t)} V(x,\xi) + L_{F(\xi,t)} V(x,\xi) <0
    \end{equation}
    [or $\le \lambda V(x,\xi)$], and for some $a_1,a_2>0$ satisfying
    \begin{equation}
    \label{cond:2}
    a_1 |x - \xi|^2 \le V(x,\xi) \le a_2 |x-\xi|^2.
    \end{equation}
\end{definition}
The IAS of the system $\dot{x}=F(x,t)$ is equivalent to the existence of an IAS Lyapunov function for set stability of $x=\xi$, by considering an auxiliary dynamics $\dot\xi =F(\xi,t)$ \cite{ANG}. We are interested in designing a feedback law such that
\begin{equation}\label{eq:convergence}
  \lim_{t\to\infty} |x(t) - x_d(t)| =0.
\end{equation}


\noindent {\bf Problem Formulation} For the systems \eqref{eq:NL} and \eqref{eq:target} with any $u_d(t) \in \cale_{x_{d0}}$ and $\forall x_{d0} \in \cale$, design a controller $u=\alpha(x,t)$ achieving {\bf i}) the IAS of the system \eqref{eq:NL} (or IES for exponential tracking) and {\bf ii}) invariance of $\{(x,x_d) \in \rea^{2n}| x=x_d\}$.

\begin{remark}\rm
The qualifier ``universal'' refers to target trajectories generated by arbitrary $u_d \in \cale_{x_{d0}}$ and $x_{d0}\in \rea^n$. Another well-studied formulation of trajectory tracking is to achieve \eqref{eq:convergence} for \emph{a class of} inputs $u_d$, which is expected to have weaker requirements on control systems. It, however, involves additional excitation assumptions on desired trajectories or equivalently on $u_d$ \cite{JIANIJ}. A similar issue appears in nonlinear observers, where the universal case is related to \emph{uniform observablilty} \cite{BES}. For weakly observable systems, persistent excitation of system trajectories is required to continue the observer design \cite{ORTetal}.
\end{remark}


%
%
For any $(x_{d0}, x_0) \in \rea^{2n}$, there exists a regular smooth curve $\bar{\gamma}:[0,1] \to \rea^n$ such that $\bar\gamma(0)= x_{d0}$, $\bar\gamma(1) = x_0$, and
\begin{equation}\label{eq:ineq1}
  \int_{0}^{1} \sqrt{ {\dot \gamma (s)}^\top M(\bar{\gamma}(s) ) \dot \gamma(s) }
  \le
  (1+\varepsilon) d(x_{d0}, x_0)
\end{equation}
for some $\varepsilon>0$. Considering the infinitesimal displacement $\delta x(t) = {\partial \over \partial s} X(t,\bar{\gamma}(s),u_s)|_{s=1}$ and $\delta u \coloneqq {\partial \over \partial s} u_s|_{s=1}$, the time derivative of which is given by
\begequ\label{diff_dyn}
\delta\dot{x} = A(x,u) \delta x + B(x) \delta u,
\endequ
with $A(x,u):= {\partial f (x)\over \partial x} + {\partial B(x) u \over \partial x}$. 

\section{Main Results of Necessary Conditions}
\label{sec:3}

In this section, we present the main results of the paper, that is, identifying necessary conditions of the systems which may achieve universal tracking, under several different assumptions. The links to CCMs will also be clarified. 
%
\subsection{Necessary Condition of Universal Tracking}
\label{sec:31}

Let us consider the basic case of universal tracking, in which we need the following. 

\begin{assumption}
\rm\label{ass:1}
Consider the system \eqref{eq:NL} and the target dynamics \eqref{eq:target} forward invariant in $\cale \subset \rea^n$ with any input $u_d$ and $x_{d0}$. There exists a feedback law $u = \alpha(x,t)$\footnote{The feedback $u=\alpha(x,t)$ may also depend on $x_d(t)$ and $u_d(t)$, and the ``time-varying'' form is adopted to show this point.} such that
\begin{itemize}
    \item[1)] The set $\{(x,x_d)\in \cale\times \cale|x = x_d\}$ is forward invariant.
    \item[2)] The system $\dot x = f(x) + B(x)\alpha(x,t) := F(x,t)$ is IAS (or IES) with the Lyapunov function in the sense of Definition \ref{def:lya}.
\end{itemize}
\end{assumption}

The above assumption characterizes the problem formulation of universal tracking in terms of incremental stability.

\begin{proposition}\rm
\label{prop:necessary}
If Assumption \ref{ass:1} holds, then there exists a symmetric matrix $2a_1 I_n \le M(x) \le 2a_2 I_n $ such that

\noindent{C1)} for any non-zero $v\in \rea^n$, we have
\begin{equation}
\label{eq:ccm}
\begin{aligned}
v^\top M(x)B(x)  = 0 
\quad \implies \quad
v^\top \Bigg[
\partial_f M(x)  
+ 2M(x) {\partial f(x)\over \partial x}
\Bigg] v   < 0
\end{aligned}
\end{equation}
[or $\le - \lambda v^\top M(x) v$ for IES] and the PDEs for $i=1,\ldots, m$
\begin{equation}
\label{pde:1}
  \partial_{B_i} M(x) + {\partial B_i(x) \over \partial x}^\top M(x) + M(x){\partial B_i(x) \over \partial x} =0.
\end{equation}
\noindent{C2)} The \emph{dual} differential system
\begin{equation}
\label{syst:dual}
\begin{aligned}
\dot{p}  = {\partial f(x) \over \partial x}^\top p, ~
y_p  =  [M(x)B(x)]^\top p
\end{aligned}
\end{equation}
is uniformly zero-state detectable (with exponential convergence speed for IES).
\end{proposition}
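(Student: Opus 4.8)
The plan is to distill a Riemannian metric from the incremental Lyapunov function supplied by Assumption \ref{ass:1}, and then to strip off, one by one, the contributions of the feedback and of the input matrix. Let $V(x,\xi)$ be the IAS (resp.\ IES) Lyapunov function of the closed loop $\dot x=F(x,t):=f(x)+B(x)\alpha(x,t)$. Since by \eqref{cond:2} the diagonal value $V(x,x)=0$ is the global minimum of $V(x,\cdot)$, we have $\partial_\xi V(x,x)=0$, and I would set
\[
M(x):=\frac{\partial^2 V}{\partial \xi\,\partial \xi^\top}(x,\xi)\Big|_{\xi=x},
\]
which is symmetric; a second-order Taylor expansion of \eqref{cond:2} about $\xi=x$ then gives $2a_1 I_n\le M(x)\le 2a_2 I_n$, so $M$ is smooth, uniformly bounded and uniformly positive definite.

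The next step is a second-order converse showing that $\mathcal V(x,\delta x):=\delta x^\top M(x)\delta x$ is a differential Lyapunov function along the closed-loop differential dynamics $\delta\dot x=\partial_x F(x,t)\,\delta x$. For a solution $x(\cdot)$ and a nearby solution $\xi(\cdot)=x(\cdot)+\epsilon\,\delta x(\cdot)+o(\epsilon)$ of $\dot x=F(x,t)$, the expansion above gives $V(x(t),\xi(t))=\tfrac{\epsilon^2}{2}\mathcal V(x(t),\delta x(t))+o(\epsilon^2)$, so dividing \eqref{cond:1} by $\epsilon^2$ and letting $\epsilon\to 0$ yields $\delta x^\top[\,\partial_F M+2\,\texttt{sym}\{M\,\partial_x F\}\,]\delta x\le 0$ (resp.\ $\le\lambda\,\delta x^\top M\delta x$) for all $(x,\delta x,t)$, with the usual care to keep the inequality strict in the IAS case. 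Inserting $\partial_x F=\partial_x f+\sum_i\alpha_i\,\partial_x B_i+B\,\partial_x\alpha$ and $\partial_F M=\partial_f M+\sum_i\alpha_i\,\partial_{B_i}M$, the bracket equals
\[
N(x)+\sum_{i=1}^m\alpha_i(x,t)\,Q_i(x)+2\,\texttt{sym}\{M(x)B(x)\,\partial_x\alpha(x,t)\},
\]
where $N:=\partial_f M+2\,\texttt{sym}\{M\,\partial_x f\}$ and $Q_i:=\partial_{B_i}M+{\partial B_i \over \partial x}^\top M+M\,{\partial B_i \over \partial x}$ is the left-hand side of \eqref{pde:1}. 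Restricting $\delta x$ to the $(n-m)$-dimensional subspace $\mathcal S(x):=\{v:v^\top M(x)B(x)=0\}$ kills the last term, leaving $\delta x^\top N(x)\delta x+\sum_i\alpha_i(x,t)\,\delta x^\top Q_i(x)\delta x\le 0$.

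The crux is to deduce $Q_i\equiv 0$, i.e.\ the Killing-type PDEs \eqref{pde:1}. Forward invariance of the diagonal together with $B$ being full rank forces the applied feedback on the diagonal to satisfy $\alpha(x_d(t),t)=u_d(t)$, and $u_d$ sweeps a neighbourhood of any prescribed value in $\rea^m$ as the feasible desired input varies. Since $N$ is independent of this freedom and the $\partial_x\alpha$-term has already been removed, linearity in $\alpha_i$ of the last displayed inequality forces $\delta x^\top Q_i(x)\delta x=0$ for every $\delta x\in\mathcal S(x)$ and every $i$. Promoting this projected identity to the full matrix identity \eqref{pde:1} — which does not hold automatically for the raw Hessian metric — is where the real work lies; I would carry it out by exploiting the non-uniqueness of the certifying metric (one may modify $M$ along the $B$-directions without disturbing $\mathcal S(x)$ or the restriction $N|_{\mathcal S(x)}$) or, equivalently, via the dynamic-extension construction announced in the Introduction, which renders $B$ constant and reduces \eqref{pde:1} to the requirement that $M$ be independent of the extended input.

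With \eqref{pde:1} in hand the remaining inequality collapses to $\delta x^\top N(x)\delta x<0$ (resp.\ $\le-\lambda\,\delta x^\top M\delta x$) for all nonzero $\delta x\in\mathcal S(x)$, which is exactly C1 since $\mathcal S(x)=\{v:v^\top M(x)B(x)=0\}$. Finally, C2 is the co-state reformulation of C1: a solution of \eqref{syst:dual} with $y_p\equiv 0$ obeys $B(x)^\top M(x)p\equiv 0$, i.e.\ $p$ remains in $\mathcal S(x)$, and reading C1 through the dual metric $W=M^{-1}$ along the appropriate time parametrization exhibits a quadratic certificate in $p$ that decays there, which is precisely (uniform, resp.\ exponential) zero-state detectability. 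I expect the principal obstacles to be the regularity bookkeeping in the second-order converse and, above all, upgrading the projected identity to the full PDE \eqref{pde:1}.
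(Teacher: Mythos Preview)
Your construction of $M(x)$ as the diagonal Hessian of $V$ and the second-order converse leading to
\[
\delta x^\top\Big[N(x)+\sum_{i}\alpha_i(x,t)\,Q_i(x)+2\,\texttt{sym}\{M(x)B(x)\,\partial_x\alpha(x,t)\}\Big]\delta x<0
\]
for all $\delta x\neq 0$ are exactly what the paper does. The divergence---and the genuine gap---is in the order of the next two steps.

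You restrict to $\mathcal S(x)=\{v:v^\top M(x)B(x)=0\}$ \emph{first} in order to kill the $\partial_x\alpha$-term, and only then vary $u_d$; this yields $v^\top Q_i(x)v=0$ merely for $v\in\mathcal S(x)$. You correctly flag that promoting this projected identity to the full matrix PDE \eqref{pde:1} is nontrivial, but your proposed remedies (reshaping $M$ along the $B$-directions, or invoking dynamic extension) are vague and, as it turns out, unnecessary.

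The paper proceeds in the opposite order and thereby sidesteps the difficulty entirely. The observation you are missing is that the invariance condition $\alpha(x_d(t),t)=u_d(t)$ gives the decomposition $\alpha(x,t)=\alpha_0(x,t)+u_d(t)$ with $\alpha_0(x_d,t)=0$, and since $u_d$ is a function of $t$ only, one has $\partial_x\alpha=\partial_x\alpha_0$, which carries \emph{no} $u_d$-dependence. Hence in the full, unrestricted inequality the only term depending on $u_d$ is $\sum_i (u_d)_i\, v^\top Q_i(x)v$, and arbitrariness of $u_d\in\rea^m$ forces $v^\top Q_i(x)v=0$ for every $v\in\rea^n$, i.e.\ $Q_i\equiv 0$ as a matrix---no promotion step required. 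Only \emph{after} \eqref{pde:1} is established does the paper restrict to $\mathcal S(x)$ to cancel the remaining $\texttt{sym}\{MB\,\partial_x\alpha_0\}$ term and read off \eqref{eq:ccm}.

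For C2 the paper is more direct than your dual-metric sketch: with $\mathcal V(x,p)=p^\top M(x)p$ one computes $\dot{\mathcal V}=p^\top\big[\dot M+({\partial f/\partial x})^\top M+M\,{\partial f/\partial x}\big]p$ along \eqref{eq:NL}, and when $y_p\equiv 0$ (so $p^\top M B=0$) the implication \eqref{eq:ccm} gives $\dot{\mathcal V}<0$ for $p\neq 0$, which is the required detectability.
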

\begin{proof}
Considering the IAS case, we define ${\partial^2 V \over \partial \xi^2}(x,x) =M(x)$, which is motivated by \cite{SANPRA}. According to \eqref{cond:2}, it yields $V(x,x)=0$ and
$
2a_1 I_n \le M(x) \le 2a_2 I_n.
$
For any pair $(x,\xi) \in \cale \times \cale$, we parameterize $\xi$ as $\xi=x+ rv$ for any $v\in \rea^n$ with $|r|$ sufficiently small. We get
$$
{\partial V \over \partial x}(x, x+rv) F(x,t) + {\partial V \over \partial \xi}(x,x+rv) F(x+rv,t) <0.
$$

A \emph{necessary} condition to the above inequality is that the second-order terms in the Taylor expansion with respect to $r$ are negative, that is
\begin{equation}
\label{ineq:p1_1}
\begin{aligned}
{r^2 \over 2}
\Bigg[
{\partial \over \partial x}\Big( v^\top {\partial^2 V \over \partial \xi^2 } v \Big)\Big|_{(x,x)}
 +
{\partial \over \partial \xi}\Big( v^\top {\partial^2 V \over \partial \xi^2 } v \Big)\Big|_{(x,x)}\Bigg]F(x,t)
+ 2r^2 v^\top M(x) {\partial F(x,t) \over \partial x} v
 <0.
\end{aligned}
\end{equation}
According to the definition of $M(x)$, we have
$$
{\partial \over \partial x}(v^\top M v)
=
 {\partial \over \partial x}\Big( v^\top {\partial^2 V \over \partial \xi^2 } v \Big)\Big|_{(x,x)}
 +
{\partial \over \partial \xi}\Big( v^\top {\partial^2 V \over \partial \xi^2 } v \Big)\Big|_{(x,x)},
$$
thus the inequality \eqref{ineq:p1_1} becomes
\begin{equation}\label{ineq:p1_2}
  {\partial \over \partial x}(v^\top M(x) v) F + v^\top\Bigg[ {\partial F\over \partial x}^\top M(x) + M(x){\partial F\over \partial x} \Bigg] v <0,
\end{equation}
for any non-zero $v\in\rea^n$. This condition relies on the existence of a feedback $\alpha(x,t)$ satisfying the above inequality. 

Now we decompose the feedback $\alpha(x,t)$ into
\begin{equation}
\label{decomposition}
\alpha(x,t) = \alpha_0(x,t) + u_d.
\end{equation}
For any trajectory $x_d\in \cale$, we assume that $x=x_d$ is invariant in Assumption \ref{ass:1}. That is, for $x_0 = x_{d0}$, we have that 
$$
\dot{x} - \dot{x}_d   = f(x) + B(x) \alpha(x,t) - f(x_d) + B(x_d)u_d \quad
\implies \quad \alpha_0(x_d,t)=0,
$$
where we used the full rank of $B(x)$ in the last implication. Invoking the Lagrange reminder representation of the Taylor series expansion, we note that $\alpha_0(x,t)$ can be represented as $\alpha_0(x,t) = \alpha_1(x,t)(x-x_d)$ for some function $\alpha_1$. Substituting \eqref{decomposition} into the inequality \eqref{ineq:p1_2}, we have
$$
\begin{aligned}
& v^\top \Bigg[ \partial_{(f + B(\alpha_0 + u_d ))} M  + 2\Big({\partial f \over \partial x} +B { \partial \alpha \over \partial x} \Big)^\top M 
+ \sum_{i=1}^{n} \Big[{\partial B_i \over \partial x}^\top M + M {\partial B_i \over \partial x}
\Big](\alpha_0 + u_d)_i
\Bigg]v <0,
\end{aligned}
$$
which is satisfied uniformly for \emph{arbitrary} $u_d \in \cale_{x_{d0}}$ with $\forall x_{d0}\in\cale$ and $v\neq 0$, thus the PDEs \eqref{pde:1} hold. Then, we have
$$
v^\top \Bigg[\partial_f M(x) + 2\Big({\partial f(x) \over \partial x}^\top + {\partial \alpha(x,t)  \over \partial x}^\top B(x)^\top\Big) M(x)\Bigg]v <0
$$
for any non-zero $v\in\rea^n$, equivalently written as \eqref{eq:ccm}. 

Let us consider the necessary condition C2, in which we need to show that for the system \eqref{syst:dual}
\begin{equation}
\label{impl:detectable}
    y_p \equiv 0 \quad \implies \quad \lim_{t\to\infty} p(t) =0.
\end{equation}
Consider the Lyapunov function candidate $\mathcal{V}(x,p) = p^\top M(x) p$, the time derivative of which is
$$
\dot{\mathcal{V}} = p^\top \Bigg[\dot{M}(x) + {\partial f(x) \over \partial x}^\top M(x) + M(x){\partial f(x) \over \partial x}  \Bigg] p,
$$
where $x$ is generated by \eqref{eq:NL}. Consider the case $y_p \equiv 0$ and \eqref{eq:ccm}, we have $\dot{\mathcal{V}} < 0$ for any $p \neq0$, thus verifying  \eqref{impl:detectable}. The IES case can be proved \emph{mutatis mutandis}.
\end{proof}

\begin{remark}\rm
The condition C1 for universal asymptotic tracking resembles the ``stronge'' CCM proposed in \cite{MANSLO} but without a fixed contracting rate. We underscore that the PDE \eqref{pde:1} is also a necessity of \emph{differential passivity} \cite{VDS}. The condition C2 motivates us to construct tracking controllers with an observation that stabilizing the differential system can be translated into driving to zero the differential output.
\end{remark}
\begin{remark}\rm
As figured out in \cite{MANSLO}, the CCM resembles the CLF for asymptotic stabilization of nonlinear systems \cite{SON}. The existence of a CLF is, indeed, a \emph{necessary} condition of asymptotic controllability of nonlinear systems. Similarly, Proposition \ref{prop:necessary} verifies the CCM as necessity to achieve universal asymptotic tracking.
\end{remark}

\subsection{Dynamic Extension is Unnecessary}
The CCM was originally introduced for \emph{static} feedback control. On the other hand, dynamic feedback is a widely popular technique in feedback control for different purposes, {\em e.g.}, achieving relative degree, output feedback, performance enhancement and relaxing constraints. Particularly, it is widely recognized that dynamic extension may make a given nonlinear system achieve relative degree, then combining with feedback linearization we can design a dynamic controller to obtain an error system with linear time invariant dynamics, in order to be able to track any feasible trajectories \cite[Section 5.4]{ISI}. Therefore, a natural question relies on whether we can simply the necessary conditions by introducing dynamic extensions. Let us first consider the following example.

\begin{example}\rm
Consider the nonlinear system
\begin{equation}
\label{examp:dyn_ext}
\dot{x} = \begmat{-x_1 \\ x_4 \\ -x_3 + x_4 \\ 0}
+
\begmat{1 & 0 \\ x_3^2 +1 & 0 \\ 0 & 0 \\ 0 & 1}u,
\quad
y = \begmat{x_1 \\ x_2}
\end{equation}
with input $u \in \rea^2$. A simple solution to output tracking is via feedback linearization. Note that the system does not have relative degree with the given output mapping \cite[Section 5.4]{ISI}. However, we are able to achieve (vector) relative degree $[2,2]$ w.r.t. the new input $[v_1~~u_2]^\top$ by adding dynamic extension
$
u_1 = \xi , ~
\dot{\xi} = v_1,
$
and then use feedback linearization to solve the problem. It is easy to verify that the system enjoys a CCM by performing a change of input $u =\col((x_3^2+1)(-x_2 + v_1), v_2)$. It implies that a static feedback can achieve universal tracking for this example.
\end{example}

The above example shows that relative degrees are not fundamentally related to the universal stabilizability. We are now in position to show that dynamic extension is {\em unnecessary} to relax requirements in contraction analysis. Consider the objective that the system \eqref{eq:NL} asymptotically tracks the trajectory generated by the target system \eqref{eq:target} with an integral control\footnote{It can be extended to the more general cases, but we here adopt the basic case to streamline the underlying mechanism. See Remark \ref{rem:general_dyn_ext}.}, that is
\begin{equation}
\label{target_dyn_int}
\begin{aligned}
  u_d = \theta , \quad 
  \dot{\theta}  = f_c(x_d,\xi) + u_{\tt I}^d
\end{aligned}
\end{equation}
with the extended state $\theta\in \rea^{m}$ and $u_{\tt I}^d \in \rea^{m}$ involving in the integral action. We are interested in the necessary conditions of universal tracking $x_d$ generated by \eqref{target_dyn_int}.
\begin{proposition}\rm
Consider the system \eqref{eq:NL} in closed loop as
\begin{equation}
\label{dyn_closed}
 \begin{aligned}
    \dot{x}  = f(x) + B(x) u_{\tt K}, \quad
    \dot{x}_c  =  u_{\tt I},
 \end{aligned}
\end{equation}
with the dynamic feedback $(u_{\tt K}, u_{\tt I})$. Suppose that the system \eqref{dyn_closed} is IAS and forward complete with $x_d$ generated under \eqref{target_dyn_int} as a particular solution for any $u_{\tt I}^d$. Then there exists a metric $2a_1 I_n \le M(x) \le 2a_2 I_n$ satisfying the condition C1.
\end{proposition}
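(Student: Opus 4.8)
The plan is to reduce the dynamically-extended closed loop \eqref{dyn_closed}–\eqref{target_dyn_int} to the static setting of Proposition~\ref{prop:necessary}, so that the conclusion follows by applying that proposition on the extended state space and then projecting. First I would observe that the extended plant $(x,x_c)\in\rea^{n+m}$ together with the feedback $(u_{\tt K},u_{\tt I})$ forms an autonomous (time-varying) closed loop $\dot z = F(z,t)$ with $z=\col(x,x_c)$, and the target $(x_d,\theta)$ generated by \eqref{eq:target}–\eqref{target_dyn_int} is, by hypothesis, a particular solution of this closed loop for arbitrary exogenous $u_{\tt I}^d$. Crucially, $x_c$ (resp.\ $\theta$) enters the $x$-dynamics only through $u_{\tt K}$, which plays the role of the control input: the extended system has the cascade/driven structure $\dot x = f(x) + B(x)u_{\tt K}$, $\dot x_c = u_{\tt I}$, with input matrix $\bar B(z) = \mathrm{blockdiag}\{B(x), I_m\}\in\rea^{(n+m)\times(n+m)}$ — note this is now square and full rank.

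Next, since the hypotheses of Assumption~\ref{ass:1} hold for the extended system (IAS, forward invariance of the diagonal, with an IAS Lyapunov function $\bar V$ satisfying the quadratic bounds), Proposition~\ref{prop:necessary} applied on $\rea^{n+m}$ yields a metric $\bar M(z)$ with $2a_1 I_{n+m}\le \bar M(z)\le 2a_2 I_{n+m}$ satisfying the strong-CCM inequality \eqref{eq:ccm} and the PDEs \eqref{pde:1} relative to the columns of $\bar B$. Because $\bar B$ is full rank of dimension $n+m$, the annihilator condition $\bar v^\top \bar M(z)\bar B(z)=0$ forces $\bar v = 0$; so the content of \eqref{eq:ccm} in the extended space is not the implication but the PDEs \eqref{pde:1} for every column of $\bar B$, i.e.\ for the columns of $B(x)$ embedded in the top block \emph{and} for the standard basis vectors $e_j$ sitting in the $x_c$-block. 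The latter $m$ equations say $\partial_{x_{c,j}}\bar M(z)=0$ for all $j$, i.e.\ $\bar M$ does not depend on $x_c$: $\bar M(z)=\bar M(x)$. The former $m$ equations then reduce to the PDEs \eqref{pde:1} for the $(x,x)$-block of $\bar M$ with respect to $B_i(x)$.

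The last step is to extract the $n\times n$ metric. I would set $M(x)$ to be the Schur complement (equivalently, the top-left block of $\bar M(x)^{-1}$ inverted, or — more directly — restrict $\bar M$ to the subspace $x_c=\mathrm{const}$) and check that $M$ inherits the bounds $2a_1 I_n\le M(x)\le 2a_2 I_n$ and that inequality \eqref{eq:ccm} survives the restriction. For the bounds, restriction to a coordinate subspace gives $v^\top \bar M(x) \col(v,0) $ which is immediately sandwiched; if instead one uses the Schur complement some care with the constants is needed but the same $a_1,a_2$ (or harmless rescalings) work. For \eqref{eq:ccm}: take nonzero $v\in\rea^n$ with $v^\top M(x)B(x)=0$ and lift it to $\bar v=\col(v,0)$; then $\bar v^\top\bar M\bar B = \col(v^\top\bar M_{11}B,\; v^\top\bar M_{12})$ need not vanish, so the naive lift does not land in the extended annihilator (which is trivial anyway). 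This is exactly the obstacle. The fix is to use the original IAS Lyapunov argument directly on the $x$-subsystem: along the diagonal $x_c=x_{c,d}$, differentiating $\bar V$ and using that $\bar M$ is $x_c$-independent and that the feedback restricted to the invariant set reproduces $u_d=\theta$, the cross terms involving $x_c-x_{c,d}$ enter only at higher order in the Taylor expansion (as in the $\alpha_0 = \alpha_1(x,t)(x-x_d)$ argument in the proof of Proposition~\ref{prop:necessary}), so the second-order test in $r$ along $\bar\gamma(s)=\col(\gamma(s),\mathrm{const})$ produces precisely \eqref{ineq:p1_2} for $M=\bar M_{11}$, hence \eqref{eq:ccm} after substituting the input decomposition and invoking the PDEs. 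The main work, therefore, is bookkeeping the Taylor expansion of $\bar V$ restricted to curves lying in the slice $x_c=\mathrm{const}$ and confirming the $x_c$-coupling is genuinely second order on the invariant diagonal; everything else is a transcription of Proposition~\ref{prop:necessary}.
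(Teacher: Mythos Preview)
Your reduction to the extended system and then projecting back is the right strategy, and it is what the paper does; but there is a concrete dimension error that derails your primal argument, and your proposed fix does not close the gap. You write $\bar B(z)=\diag(B(x),I_m)\in\rea^{(n+m)\times(n+m)}$; in fact $B(x)\in\rea^{n\times m}$ with $n>m$, so $\bar B\in\rea^{(n+m)\times 2m}$ is tall, not square. Consequently the extended annihilator is \emph{not} trivial --- it has rank $n-m>0$ --- and the implication \eqref{eq:ccm} on the extended space carries genuine content beyond the PDEs. Your deduction that $\bar M$ is $x_c$-independent (from the last $m$ Killing equations) is correct, and so is the $(1,1)$-block of the PDE for the first $m$ columns; the problem is solely the inequality.

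Your fix --- redo the second-order Taylor test of Proposition~\ref{prop:necessary} along $\bar v=\col(v,0)$ and read off the inequality for $M=\bar M_{11}$ --- does not work as stated. With $F=\col(f+B\alpha_{\tt K},\alpha_{\tt I})$ the quadratic form $\bar v^\top\big[\partial_F\bar M+2(\partial F/\partial z)^\top\bar M\big]\bar v$ contains, besides the $\bar M_{11}$ piece you want, the cross term $2v^\top(\partial\alpha_{\tt I}/\partial x)^\top\bar M_{21}v$, which is second order in $r$ (not higher) and has no reason to vanish: the $x_c$-independence of $\bar M$ says nothing about $\bar M_{12}=\bar M_{21}^\top$. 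So the slice argument leaks through the off-diagonal blocks of $\bar M$.

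The paper sidesteps this by dualizing \emph{before} projecting. Writing C1 in terms of $W=M^{-1}$ via the annihilator form \eqref{W}, the key structural observation is that a full-rank annihilator of $\bar B=\diag(B,I_m)$ is $\bar B_\bot=\col(B_\bot(x),0)$: the $I_m$ block has no left annihilator, so $\bar B_\bot$ lives entirely in the $x$-slot. Since also $\bar f=\col(f,0)$, the extended dual inequality $\bar B_\bot^\top(\partial_{\bar f}\bar W+\ldots)\bar B_\bot<0$ collapses to its $(1,1)$-block $B_\bot^\top(\partial_f\bar W_x+\ldots)B_\bot<0$ with no cross terms whatsoever. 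One then takes $M(x)=\bar W_x^{-1}(x,0)$ --- which is the Schur complement of $\bar M_{22}$ in $\bar M$, \emph{not} the restriction $\bar M_{11}$ you also list as an option; these are different matrices and only the former inherits the inequality cleanly. In short, the missing idea is: pass to the dual metric so that the annihilator (rather than the span) localizes to the $x$-subsystem.
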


\begin{proof}
The condition C1 is equivalent to the existence of a dual metric $W(x) = M^{-1}(x)$ such that
\begin{equation}
\label{W}
\begin{aligned}
 & B_\bot^\top(x) \bigg( \partial_f W + {\partial f(x) \over \partial x} W + W {\partial f(x) \over \partial x}^\top \bigg) B_\bot(x) <0 \\
 & \partial_{B_i} W(x) - {\partial B_i(x) \over \partial x} W - W {\partial B_i(x) \over \partial x}^\top =0,
\end{aligned}
\end{equation}
for $i=1\ldots,n$, with $B_\bot(x)$ a full-rank left annihilator.

When we introduce the additional degree of freedom to design dynamic extensions, it is equivalent to verify the above condition for the extended dynamics 
$$
\dot{\chi} = \bar f(\chi) + \bar B(\chi)\begin{pmatrix}u_{\tt K} \\ u_{\tt I}
\end{pmatrix}	
$$ 
with $\chi:= \col(x,x_c)$, $\bar f(\chi) = \col(f(x, 0_{m \times 1})$ and $\bar B(\chi) =\diag(B(x) , I_{m})$. Since $x_c$ can be any feasible trajectories in the target system \eqref{target_dyn_int}, following the proof of Proposition \ref{prop:necessary} and using the dual property, the extended system should satisfy
\begin{equation}
\begin{aligned}
  & \bar B_\bot^\top(\chi) \bigg( \partial_{\bar f} \bar W(x,x_c) + {\partial \bar  f(\chi) \over \partial \chi} \bar W + \bar W {\partial \bar f(\chi) \over \partial \chi}^\top \bigg) \bar B_\bot(\chi) <0  \\
&  \partial_{\bar B_i} \bar W(x,x_c) - {\partial \bar B_i(\chi) \over \partial \chi} \bar W - \bar W {\partial \bar B_i(\chi) \over \partial \chi}^\top  =0,
\label{barW}
\end{aligned}
\end{equation}
for some $\bar{a}_1  I_{n + m} \le \bar W(x,x_c) \le \bar{a}_2 I_{n + m}$ with $\bar a_1, \bar a_2 >0$. We partition the matrix $\bar W(x,x_c)$ conformally as
$$
\bar{W}(x,x_c) = \begmat{\bw_x(x,x_c)& \bw_{xc}(x,x_c)\\\bw_{xc}(x,x_c) & \bw_c(x,x_c)},
$$
and note that $\bw_1(x,x_c)$ is also positive definite. Computing the $(1,1)$-block of the second equation in \eqref{barW} for $i=1\ldots,n$, we may get
$$
\partial_{B_i} \bw_x(x,x_c) - 2{\tt sym}
\left\{ {\partial B_i(x) \over \partial x} \bw_x \right\} = 0
$$
as a \emph{necessary} condition, where we used $\partial_{\bar B_i} \bw_x(x,x_c) = \partial_{ B_i} \bw_x(x,x_c)$ since the last $m$ elements in $\bar{B}_i(x)$ $(i=1,\ldots, n)$ are zeros.

It is clear that a feasible full-rank annihilator of $\bar B(\chi)$ is
$
\bar B_\bot(\chi) = \col(B_\bot(x), 0),
$
based on which we may get the $(1,1)$-block of the first inequality in \eqref{barW} as
$$
  B_\bot^\top(x) \bigg( \partial_f \bar W_x(x,x_c) + {\partial   f(x) \over \partial x} \bar W_x + \bar W_x {\partial  f(x) \over \partial x}^\top \bigg)  B_\bot(x) <0.
$$
Note that the above inequality holds for all $(x,x_c) \in \rea^n \times \rea^m$. Simply selecting
$$
M(x) = \bw^{-1}_x(x,0),
$$
and invoking the duality, we complete the proof.
\end{proof}

In the above analysis we show that we cannot weaken the necessary conditions via adding an integral action. 
\begin{remark}\rm
\label{rem:general_dyn_ext}
It is natural to consider the more general case of dynamic extension
$
 u  = \alpha(x,x_c) , ~
 \dot{x}_c  = \eta(x,x_c) + \gamma(x,x_c)v
$
with new input $v$ and $x_c \in \rea^{n_c}$. If we have a radically unbounded assumption on $\alpha(x,x_c)$ for fixed $x$, C1 is still a necessary condition. It shows the invariance of CCMs under dynamic extension. Note that the additional radical unboundedness assumption is used to force the PDEs \eqref{pde:1} to hold uniformly. 
\end{remark}

\begin{remark}\rm
Invoking the fact that every feedback linearizable system admits a CCM, we conclude that the system which can achieve relative degree via dynamic extension also has a CCM. Roughly speaking, if a nonlinear system can achieve universal asymptotic tracking with desired trajectories generated by a dynamic controller, then the system has a CCM. 
\end{remark}

\subsection{Necessary Condition for Robust Tracking}
\label{sec:robust}

Now we are carrying out the analysis for robust universal tracking control. Consider the closed loop
\begin{equation}
\label{syst_pert}
\dot{x} = f(x) + 
B(x)\alpha(x,t) + w(t)
\end{equation}
under the feedback $\alpha(x,t)$, in the presence of perturbation $w(t) \in \mathcal{L}_2^e$, which asymptotically {\em practically} tracks the trajectories of \eqref{eq:target} with any $u_d(t) \in \cale_{x_{d0}}$ and $x\in \rea^n$. With a slight abuse of notations, we denote the solution of \eqref{syst_pert} as $X_F(t,x_0,w(t))$ and $F(x,t):=f(x) +B(x)\alpha(x,t)$. To this end, we require that the target trajectory $x_d$ generated by \eqref{eq:target} is a particular solution of \eqref{syst_pert} in the absence of $w(t)$, and the closed loop is incrementally input-to-state stable (ISS), {\em i.e.},
$$
|X_F(t,\xi_1,w_1) - X_F(t,\xi_2,w_2)| \le \beta_1(|\xi_1 - \xi_2|,t) + \beta_2(|w_1 - w_2|_\infty),
$$
with $\beta_1 \in \mathcal{KL}$ and $\beta_2 \in \mathcal{K}_\infty$ for any pairs $(\xi_1,\xi_2) \in \rea^n \times \rea^n$. We need the following.

\begin{definition}
\label{def:iISS}\rm
A smooth function $V(x,\xi) :\rea^n \times \rea^n \to \rea_{\ge 0}$ is called an incremental ISS Lyapunov function if \eqref{cond:2} holds and there exists $\beta_3 \in \mathcal{K}_\infty$ such that $\forall w_1,w_2\in \mathcal{U} \subset \rea^n$ $\forall x_1,x_2\in\rea^n$, we have
$$
\beta_3(|x_1-x_2|) \ge |w_1 - w_2| \quad \implies \quad
 L_{F(x,t) + w_1} V(x,\xi) + L_{F(\xi,t) + w_2} V(x,\xi) < -\lambda V(x,\xi),
$$
with $\lambda >0$.
\end{definition}
\begin{proposition}
\label{prop:robust_necessary}\rm
If Assumption \ref{ass:1} holds but with an incremental ISS Lyapunov function in the sense of Definition \ref{def:iISS}, then there exists a metric $2a_1 I_n \le M(x) \le 2a_2 I_n $ such that 
\begin{equation}\label{lmi:iISS}
\begin{aligned}
\bar v^\top \col(M(x)B(x), 0_{n\times m}) & = 0 \quad \implies \quad
  \bar v ^\top \begmat{ \partial_fM + 2{\tt sym}\{M{\partial f\over \partial x}\}  + \lambda I_n &   M(x)  \\
   M(x) & -  \gamma_0 I_n  } \bar v & < 0
\end{aligned}
\end{equation}
for any non-zero $\bar v \in \rea^{2n}$ and some $\gamma_0>0$.
\end{proposition}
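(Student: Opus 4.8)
The plan is to re-run the proof of Proposition~\ref{prop:necessary} on the incremental ISS Lyapunov function and then fold the extra disturbance-dependent terms into the $\gamma_0$-block of \eqref{lmi:iISS}. As before, set $M(x):=\tfrac{\partial^2 V}{\partial\xi^2}(x,x)$, so that $2a_1 I_n\le M(x)\le 2a_2 I_n$ follows from \eqref{cond:2}; I also record the two consequences of $(x,x)$ being a global minimiser of $V$ that will be used, namely $\tfrac{\partial V}{\partial x}(x,x)=0$ and $\tfrac{\partial^2 V}{\partial x\,\partial\xi}(x,x)=-M(x)$ (differentiate $\tfrac{\partial V}{\partial\xi}(x,x)\equiv 0$). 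Specialising Definition~\ref{def:iISS} to $w_1=w_2=0$ is always admissible since $\beta_3(|x_1-x_2|)\ge 0$, and in that case its inequality reduces to the IES Lyapunov condition of Definition~\ref{def:lya}; hence Assumption~\ref{ass:1} holds in its original form and the whole chain of Proposition~\ref{prop:necessary} (IES branch) applies: with $\xi=x+rv_1$ the second-order coefficient in $r$ must be nonpositive, arbitrariness of $u_d\in\cale_{x_{d0}}$ and $x_{d0}$ forces the PDEs \eqref{pde:1}, the split $\alpha=\alpha_0+u_d$ with $\alpha_0(x_d,t)=0$ kills the $\partial B_i/\partial x$ terms, and under $v_1^\top M(x)B(x)=0$ (equivalently $B^\top M v_1=0$) the $\partial\alpha/\partial x$ term drops as well. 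This produces the transverse contraction margin $v_1^\top[\partial_f M+2\,\texttt{sym}\{M\tfrac{\partial f}{\partial x}\}]v_1 \le -\lambda\,v_1^\top M(x)v_1 \le -2\lambda a_1|v_1|^2$ for every $v_1$ with $v_1^\top MB=0$.

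Next I would reinstate the disturbance to exhibit where the off-diagonal $M(x)$ block comes from. Taking $x_1=x$, $x_2=\xi=x+rv_1$ and a disturbance pair with $w_1-w_2=rv_2$ and $w_2=0$ (admissible for $|r|$ small on a neighbourhood of $v_2=0$, since there $|r||v_2|\le\beta_3(|r||v_1|)$), the only new term in $L_{F+w_1}V+L_{F+w_2}V$ is $\tfrac{\partial V}{\partial x}(x,x+rv_1)w_1$, which by the recorded facts equals $-r^2 v_1^\top M(x)v_2+O(r^3)$. Collecting this with the $O(r^2)$ drift terms from the first step and with $-\lambda V(x,x+rv_1)=-\tfrac{\lambda}{2}r^2 v_1^\top M v_1+O(r^3)$ on the right, the necessary second-order inequality is, for $v_1^\top MB=0$ and all $v_2$ near the origin, $v_1^\top[\partial_f M+2\,\texttt{sym}\{M\tfrac{\partial f}{\partial x}\}+\lambda M]v_1 + 2v_1^\top M(x)v_2 \le 0$ (after flipping $v_2\mapsto -v_2$, which the symmetric neighbourhood permits). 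This is exactly the quadratic form on the right of the implication in \eqref{lmi:iISS}, but with $\lambda M$ instead of $\lambda I_n$ and in the degenerate case where the $-\gamma_0|v_2|^2$ term is absent.

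The last step converts this into \eqref{lmi:iISS} for all $\bar v=\col(v_1,v_2)\in\rea^{2n}$ and suitable positive constants. Completing the square, $2v_1^\top M(x)v_2-\gamma_0|v_2|^2\le \tfrac{1}{\gamma_0}|M(x)v_1|^2\le \tfrac{4a_2^2}{\gamma_0}|v_1|^2$ using $M(x)\le 2a_2 I_n$, so for $v_1\ne 0$ with $v_1^\top MB=0$ the quadratic form in \eqref{lmi:iISS} is bounded by $(-2\lambda a_1+\lambda_0+\tfrac{4a_2^2}{\gamma_0})|v_1|^2$, where $\lambda_0$ denotes the constant written $\lambda$ in \eqref{lmi:iISS}. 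Choosing $\lambda_0$ and $\gamma_0$ so that $\lambda_0+4a_2^2/\gamma_0<2\lambda a_1$ (for instance $\lambda_0=\lambda a_1/2$, $\gamma_0=8a_2^2/(\lambda a_1)$) makes this strictly negative; and for $v_1=0$, $\bar v\ne 0$ the constraint is vacuous while the form equals $-\gamma_0|v_2|^2<0$. Hence \eqref{lmi:iISS} holds.

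I expect the only genuinely delicate point to be the admissibility of the disturbance perturbation $w_1-w_2=rv_2$: whether an arbitrary $v_2$ is attainable as $r\to 0$ depends on the behaviour of $\beta_3(s)/s$ near $s=0$, which may tend to $0$ and then pin $v_2$ to the origin. This turns out to be harmless, because the transverse margin $-2\lambda a_1|v_1|^2$ already comes from the $w=0$ specialisation alone and, as the third step shows, by itself implies \eqref{lmi:iISS} after completing the square (the freedom in $\gamma_0$ absorbing the off-diagonal block); the disturbance expansion is then needed only to motivate the shape of the statement. Beyond that, the work is the routine Taylor-coefficient bookkeeping already done in Proposition~\ref{prop:necessary}, together with the verification that on $\{v_1:v_1^\top MB=0\}$ the input-Jacobian terms either vanish or reproduce \eqref{pde:1}.
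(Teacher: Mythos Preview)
Your argument is correct and reaches the conclusion by a route different from the paper's. The paper first invokes the Sontag--Wang dissipation reformulation of the incremental ISS implication, $L_{F+w_1}V+L_{F+w_2}V<-\lambda V+\beta_4(|w_1-w_2|)$ for some $\beta_4\in\mathcal{K}_\infty$, and then carries out a \emph{joint} second-order expansion in the parameterisation $\xi=x+rv$, $w_1=w_2-2re$; the $-\gamma_0 e^\top e$ block appears directly as the quadratic coefficient of $\beta_4(2|r|\,|e|)$, so the full $2n\times 2n$ quadratic form and its strict negativity are obtained in a single step, with no admissibility restriction on $e$. Your route instead extracts only the transverse IES margin from the undisturbed specialisation $w_1=w_2=0$ and then observes that, because $\gamma_0$ (and implicitly the constant you rename $\lambda_0$) are existential in the statement, a completing-the-square / Schur-complement argument manufactures \eqref{lmi:iISS} a posteriori. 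This is more elementary---indeed it shows that the conclusion already follows from the plain IES hypothesis, without ever using the ISS structure---but it severs the quantitative link between $\gamma_0$ and the actual ISS gain $\beta_3$ (equivalently $\beta_4$), which is precisely the connection to the robust CCM of \cite{MANSLOcsl} that the paper wants to emphasise. Your candid discussion of the admissibility of the perturbation $w_1-w_2=rv_2$ is well placed; the paper sidesteps that difficulty by passing to the dissipation form first, so that the disturbance enters additively on the right-hand side rather than through the hypothesis of an implication.
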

\begin{proof}
The proof is similar to the one of Proposition \ref{prop:necessary}, by selecting $M(x) = {\partial^2 V(x,x) \over \partial \xi^2}$. If $w_1 = w_2 =0$, this case recovers the results in Proposition \ref{prop:necessary}, and thus we have \eqref{pde:1}. The implication in Definition \ref{def:iISS} is equivalent to 
$$
{\partial V(x,\xi)\over \partial x} (F(x,t) + w_1) + {\partial  V(x,\xi) \over \partial \xi} (F(\xi,t) + w_2)
 < -\lambda V(x,\xi) + \beta_4(|w_1 - w_2|)
$$
with $\beta_4 \in \mathcal{K}_\infty$ \cite[Remark 2.4, pp. 353]{SONWAN}. If  $w_1=w_2= 0$, then the above inequality degenerates to the IES case studied in Proposition \ref{prop:necessary}, thus \eqref{eq:ccm} and \eqref{pde:1} also hold for this case. 

For any pairs $(x,\xi) \in \cale \times \cale$ and $(w_1,w_2) \in \mathcal{U} \times \mathcal{U}$, we parameterize 
$
\xi = x + rv,~w_1 = w_2 - 2re
$
with $|r|$ and $|e|$ sufficiently small. Focusing on the second-order term in the Taylor expansion with respect to $r$, we get the following necessary condition
$$
\begin{aligned}
{r^2 \over 2}
\Bigg[
{\partial \over \partial x}\Big( v^\top {\partial^2 V \over \partial \xi^2 } v \Big)\Big|_{(x,x)}
 +
{\partial \over \partial \xi}\Big( v^\top {\partial^2 V \over \partial \xi^2 } v \Big)\Big|_{(x,x)}\Bigg]F(x,t)
  + 2r^2 v^\top M(x) {\partial F(x,t) \over \partial x} v
 - 2r^2 v^\top {\partial^2 V \over \partial x\partial \xi}\Big|_{(x,x)}e \\
  < -\lambda v^\top {\partial^2 V\over \partial^2 x}\Big|_{(x,x)}v + \gamma_0 e^\top e,
\end{aligned}
$$
for some $\gamma_0 >0$. It can be written as
$$
\begmat{v \\ e}^\top
\begmat{\partial_f M + 2{\tt sym}\{M{\partial F\over\partial x}\} + \lambda M & M \\ M & - \gamma_0 I_n}
\begmat{v \\ e} <0
$$
for any $\col(v,e) \neq 0$, where we have used \eqref{cond:2} and ${\partial^2 V \over \partial \xi^2}|_{(x,x)} = - M(x)$. Cancelling the input $\alpha(x,t)$ from the above inequality, we may get the inequality \eqref{lmi:iISS}. 
\end{proof}

\begin{remark}\rm
In \cite[Theorem 2]{ANG}, it was shown that the above incremental ISS Lyapunov function is a sufficient and necessary condition to the incremental ISS property of \eqref{syst_pert}, assuming that $\mathcal{U}$ is compact and $\alpha(\cdot)$ is time invariant. It is interesting to observe that the condition \eqref{lmi:iISS} is nothing, but just the robust CCM proposed in \cite{MANSLOcsl} for nonlinear $\calh_\infty$ control, with the ``output'' $y=x$. The above analysis shows the necessary perspective of the robust CCM in \cite{MANSLOcsl}.
\end{remark}

\section{Further Results}
\label{sec:4}

%
\subsection{Stabilizing Differential System via Damping Injection}
\label{sec:41}

In this section, we discuss some further results of the presented necessary conditions, which are motivating to tracking controller design. In \cite{MANSLO}, a Sontag's type of differential feedback controller $\delta u$ is constructed in order to stabilize the infinitesimal displacement $\delta x$, thus achieving IES. However, the obtained differential controller cannot be guaranteed smooth at $\delta x= 0$, since the \emph{small control property} only guarantees continuity. Overcoming this drawback is one of the motivations.

\begin{assumption}
\label{ass:sufficient}\rm
Given the system \eqref{eq:NL} and a forward invariant target dynamics \eqref{eq:target} under $u_d$, there exists a metric $\underline p I \le M(x) \le \bar p I$ satisfying the condition C1 of the IES case.
\end{assumption}

Unlike the CLF, the CCM defined on Riemannian manifold enjoys a \emph{quadratic} form, making it possible to conduct a structural decomposition. The differential detectability condition C2 motivates us to design carefully an output injection, along which we can passivitify the differential system.

\begin{proposition}
\label{prop:passivity}\rm 
Consider the system \eqref{eq:NL} satisfying Assumption \ref{ass:sufficient}. Then, there exist globally defined smooth functions $\gamma:\rea^n \to \rea_{\ge0}$ such that the differential feedback controller
\begin{equation}
\label{diff_control}
    \delta u = -\gamma(x) \calr(x) \delta y + \delta v
\end{equation}
with the damping matrix $\calr(x)\coloneqq [(MB)^\top MB]^{-1} $ and the differential output
$
\delta y = [M(x)B(x)]^\top \delta x,
$
makes the system differentially passive with the input-output pair $(\delta v,\delta y)$. Furthermore, the damping injection $\delta v= - \gamma_0 \delta y$ with $\gamma_0>0$ makes the origin of the differential dynamics \eqref{diff_dyn} asymptotically stable, and
$
\delta v= - \gamma_0 \calr(x) \delta y
$
makes the origin exponentially stable.
\end{proposition}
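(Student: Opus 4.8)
The plan is to use the metric $M(x)$ supplied by Assumption~\ref{ass:sufficient} as a differential (Finsler--Lyapunov) storage function $\mathcal{W}(x,\delta x):=\delta x^\top M(x)\delta x$ and to show the proposed feedback shapes its derivative along \eqref{diff_dyn} into a passive form. Differentiating $\mathcal{W}$ along $\delta\dot x=A(x,u)\delta x+B(x)\delta u$, with $A=\partial f/\partial x+\sum_i(\partial B_i/\partial x)u_i$ and $\dot M=\partial_f M+\sum_i u_i\,\partial_{B_i}M$, the quadratic form multiplying each $u_i$ is $\delta x^\top[\partial_{B_i}M+2M(\partial B_i/\partial x)]\delta x$, which vanishes identically by the PDEs \eqref{pde:1} contained in C1. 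Hence $\dot{\mathcal{W}}$ is input-independent and
\[
\dot{\mathcal{W}}=\delta x^\top N(x)\,\delta x+2\,\delta y^\top\delta u,\qquad N(x):=\partial_f M+2\,\texttt{sym}\{M\,\partial f/\partial x\},
\]
where I have written $\delta y=(MB)^\top\delta x$ so that $\delta x^\top MB\,\delta u=\delta y^\top\delta u$.

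Next I would substitute the controller \eqref{diff_control}. Since $\calr=[(MB)^\top MB]^{-1}$, one has $\delta y^\top\calr\,\delta y=\delta x^\top\Pi(x)\,\delta x$, where $\Pi:=MB[(MB)^\top MB]^{-1}(MB)^\top$ is the orthogonal projector onto $\im(MB)$, whose kernel is exactly $\{v:v^\top MB=0\}$. Thus $\dot{\mathcal{W}}=\delta x^\top[N(x)-2\gamma(x)\Pi(x)]\delta x+2\,\delta y^\top\delta v$. The scalar part of C1 (IES case) states precisely that $v^\top MB=0\Rightarrow v^\top(N+\lambda M)v\le0$, i.e.\ the symmetric form $N+\lambda M$ is negative semidefinite on $\ker\Pi$. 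By Finsler's lemma (the S-procedure with one subspace constraint) there exists, for each $x$, a scalar $\gamma(x)\ge0$ with $N(x)+\lambda M(x)-2\gamma(x)\Pi(x)\preceq0$; a globally smooth such $\gamma$ is obtained by exhibiting an explicit smooth majorant of the (continuous) minimal multiplier --- for instance combining a smooth bound $\|N\|\le 1+\operatorname{tr}(N^2)$ with a smooth positive lower bound on the spectral gap of $-(N+\lambda M)$ restricted to $\ker\Pi$, that gap being uniformly positive because C1 is strict. For such $\gamma$,
\[
\dot{\mathcal{W}}\le-\lambda\,\mathcal{W}+2\,\delta y^\top\delta v,
\]
so in particular $\dot{\mathcal{W}}\le2\,\delta y^\top\delta v$ and the closed loop is differentially passive from $\delta v$ to $\delta y$ with storage $\mathcal{W}$. (For plain passivity one only needs $N\preceq0$ on $\ker\Pi$, the non-strict reading of C1.)

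Finally I would close the loop with the two damping laws. With $\delta v=-\gamma_0\delta y$, either $\dot{\mathcal{W}}\le-\lambda\mathcal{W}-2\gamma_0|\delta y|^2\le0$ directly, or, if $\gamma$ was chosen only for plain passivity, $\dot{\mathcal{W}}\le-2\gamma_0|\delta y|^2\le0$ and uniform asymptotic stability of the origin of \eqref{diff_dyn} follows from a standard detectability (LaSalle-type) argument using C2, since on any trajectory with $\delta y\equiv0$ one has $\delta x\to0$. With $\delta v=-\gamma_0\calr(x)\delta y$ the total damping becomes $\delta u=-(\gamma(x)+\gamma_0)\calr(x)\delta y$ and the same computation yields $\dot{\mathcal{W}}\le-\lambda\mathcal{W}-2\gamma_0\,\delta y^\top\calr\,\delta y\le-\lambda\mathcal{W}$; by the comparison lemma $\mathcal{W}(t)\le e^{-\lambda t}\mathcal{W}(0)$, and with $\underline p|\delta x|^2\le\mathcal{W}\le\bar p|\delta x|^2$ the origin of \eqref{diff_dyn} is exponentially stable.

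The routine parts are the derivative of $\mathcal{W}$ and the $u$-cancellation via \eqref{pde:1}. The main obstacle is the global, smooth construction of $\gamma$: Finsler's lemma only gives pointwise feasibility, and neither the pointwise-minimal multiplier nor one built from operator norms is automatically smooth, so one must produce an explicit smooth majorant --- this is exactly where the strictness of C1 (keeping the restriction of $N+\lambda M$ to the $x$-dependent subspace $\ker\Pi$ uniformly negative) is needed, and some care is required because $\Pi(x)$ and $\ker\Pi$ vary with $x$. A secondary subtlety is the asymptotic-versus-exponential gap: in the passivity-only reading one must invoke the differential detectability C2 to upgrade ``$\delta y\to0$'' to ``$\delta x\to0$'', whereas the $\calr$-weighted injection sidesteps this by restoring the contraction rate $\lambda$ directly.
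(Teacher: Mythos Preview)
Your proposal is correct and shares the paper's overall architecture: the storage $\delta x^\top M\delta x$, the cancellation of the $u$-dependent terms via the PDEs \eqref{pde:1}, the role of C1 on the annihilator subspace, and the appeal to the differential detectability C2 for the asymptotic case. The genuine difference is in how the smooth gain $\gamma(x)$ is produced. You invoke Finsler's lemma pointwise on $N+\lambda M$ and then argue, somewhat abstractly, for a smooth majorant of the minimal multiplier. The paper instead carries out an explicit orthogonal splitting $\delta x=\delta x_p+\delta x_v$ with $P:=MB$, $\delta x_p:=PP^\dagger\delta x$, $\delta x_v:=\delta x-\delta x_p$ (so $\delta x_v^\top P=0$), applies C1 directly to $\delta x_v$, and bounds the cross term $\delta x_v^\top N\,\delta x_p$ by Young's inequality with a free parameter $r>1/(2\lambda)$. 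This yields the closed-form smooth choice $\gamma(x)\ge \tfrac{r}{\underline p}\bigl\|\partial_f M+2\,\texttt{sym}\{M\,\partial f/\partial x\}\bigr\|^2$, which dispenses entirely with the smoothness-of-multiplier issue you correctly flag as the main obstacle. One small slip in your Finsler step: under C1 in the IES reading one only has $v^\top(N+\lambda M)v\le 0$ on $\ker\Pi$, not $<0$, so the strict Finsler lemma does not apply as stated; you would need to back off to any $\lambda'<\lambda$ to recover a uniformly positive gap. With that fix your route actually delivers the sharper dissipation $\dot{\mathcal W}\le-\lambda'\mathcal W+2\,\delta y^\top\delta v$, which makes \emph{both} damping laws exponential; the paper, working only with plain passivity $\dot V\le\delta v^\top\delta y$, genuinely needs Barbalat plus C2 for $\delta v=-\gamma_0\delta y$, and exploits $\delta y^\top\calr\,\delta y=|\delta x_p|^2$ to recover the missing component of $\delta x$ in the exponential estimate when $\delta v=-\gamma_0\calr\,\delta y$.
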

\begin{proof}
For convenience, we denote $P(x) \coloneqq M(x)B(x)$, and decompose each infinitesimal displacement $\delta x$ into two parts, one of which is tangent to $P(x)$ denoted as $\delta x_p$, and the other $\delta x_v$ is orthogonal to $P(x)$, that is,
$
\delta x_p \coloneqq P(x)P^\dagger(x)\delta x,
$
and
$
\delta x_v \coloneqq \delta x - \delta x_p.
$
It is easy to verify
$
   \delta x_v^\top P(x)  = \delta x^\top [I - P(P^\top P)^{-1} P^\top] P =0.
$
We define a differential storage function as $V(x,\delta x) = {1\over 2}\delta x^\top M(x) \delta x$, the time derivative of which is
$$
\begin{aligned}
 \dot{\aoverbrace[L1R]{V(x,\delta x)}{} }
\le & \delta x^\top
\Bigg[
{1\over 2}\partial_f M(x) +  {\partial f(x) \over \partial x}^\top M(x)
\Bigg]\delta x
 +
\delta x^\top M(x) \delta u
\\
\le & ({1\over 4r} - {\lambda\over 2}) |\delta x_v|_{M(x)}^2 + \Big[ {r\over  \underline p} \Upsilon(x)^2 - \gamma(x)\Big]|\delta x_p|^2 +\delta v^\top \delta y
\end{aligned}
$$
where $u$ does not appear in the first inequality invoking \eqref{pde:1}, we have substituted $\delta x =\delta x_p + \delta x_v$ and used $\delta x_v^\top P=0$ in the second one with
$$
\Upsilon(x)\coloneqq
\left\|
\partial_f M(x) +2 {\tt sym}\Big\{{\partial f(x) \over \partial x}^\top M(x)\Big\}
\right\|,
$$
and in the last inequality we have used $\calr(x) P(x)^\top  = P^\dagger(x)$ and $\delta x_p^\top P(x) \delta v  = \delta v^\top \delta y$. For any $r>{1\over 2\lambda}$, by selecting smooth function
$$
    \gamma(x) \ge {r\over \underline p} \Upsilon(x)^2,~ \forall x\in \rea^n,
$$
we have
$$
{d\over dt}V(x,\delta x) \le \delta v^\top \delta y.
$$
It implies that the given system can be differentially passivitified via \eqref{diff_control}.

By adding a damping term $\delta v= -\gamma_0 \delta y$ with $\gamma_0 >0$, we have ${d\over dt}V(x,\delta) \le - |\delta y|^2$, thus
$$
\lim_{t\to\infty} \delta y(t) =0,
$$
in terms of Barbalat‘s lemma. In the proof of Proposition \ref{prop:necessary} we have shown that the condition C1 implies the zero-detectability of the differential system with the output $\delta y = P(x)^\top \delta x$. It implies that the origin of the differential system is exponentially stable.

For the case of $\delta v= - \gamma_0 \calr(x) \delta y$, we have
$$
\begin{aligned}
\dot{\aoverbrace[L1R]{V(x,\delta x)}{} } & \le
- \hal (\lambda - {1\over 2r})|\delta x_v|_{M(x)}^2
+
\delta x_v^\top P(x) \delta v \\
& \le - \lambda_0 V(x,\delta x)
\end{aligned}
$$
with $\lambda_0 = \min\{\lambda - {1\over 2r}, {2\over \underline p}\}$.
\end{proof}

The above analysis shows that the differential controller
\begin{equation}
\label{diff_control2}
 \delta u  = - [\gamma(x) + \gamma_0] \calr(x) \delta y \coloneqq K(x) \delta x
\end{equation}
can exponentially stabilize the differential dynamics, which is simply damping injection along the direction of the differentially zero-detectable output $\delta y$, identified in the condition C2. It guarantees ${d\over dt}V(x,\delta x) \le - \lambda_0 V(x,\delta x)$. 

\begin{remark}\rm 
The proposed differential controller enjoys global smoothness, which is simpler than the Sontag's type design in \cite{MANSLO}. The latter is not smooth in a zero Lebesgue measure set. In \cite{MANSLO} the well-known Finsler's Lemma is \emph{point-wisely} applied to calculate the differential controller in the form $\delta u = \rho(x) \delta y $ and the metric $M(x)$ simultaneously. Another difference between the proposed design and the one in \cite{MANSLO} relies on the involvement of a rotation matrix $\calr(x)$. 
\end{remark}

\subsection{Motivating Case and Path Integral}
\label{sec:42}

In this section, we study the construction of tracking controller $u=\alpha(x,t)$ complying with the differential controller $\delta u = K(x)\delta x$ proposed in Section \ref{sec:41}. In this subsection, we start from a motivating case, and invoke the well developed methods via path integral in \cite{MANSLO}. Our new analytical design will be introduced in Section \ref{sec:42}.

Let us come back the differential systems with the initial condition $x(0)= \bar\gamma(s)$, the corresponding differential controller at $X(t,\bar\gamma(s),u_s)$ is
\begin{equation}
\label{eq:diff_control_s}
  \delta u_s = K(X(t,\bar{\gamma}(s),u_s)) \delta x_s,
\end{equation}
The objective \eqref{eq:convergence} implies forward invariance, {\em i.e.}, $x(t) = x_d(t)$ for all $t \ge t_0$ if $x(t_0)= x_d(t_0)$. A necessary condition to it is the {\em boundary condition}
\begin{equation}\label{eq:boundary}
  u_s (t,\cdot)  \Big|_{s=0} = u_d(t).
\end{equation}
The differential feedback \eqref{eq:diff_control_s} may be rewritten as
\begin{equation}\label{eq:ode_s}
{\partial u_s(t, \cdot) \over \partial s} = K(X(t,\bar{\gamma}(s),u_s)) {\partial X (\cdot) \over \partial s}
\end{equation}
For a given moment $t>t_0$, the collection $\bar{\gamma}(\cdot)$ and a family of signals $u_s(\cdot) \in \call_\infty^m[0,t]$ for all $s \in [0,1]$, the solution $ X(t,\bar{\gamma}(\mu),u_s) =: \bar{\gamma}_x (t, \mu)$ defines a mapping
$$
\bar{\gamma}_x (t, \mu) : [0,\infty)\times [0,1] \to \cali_t,
$$
which is a smooth curve $\cali_t$ connecting $x(t)$ and $x_d(t)$ governed by \eqref{eq:NL}-\eqref{eq:target}. Along the curve $\cali_t$, considering the boundary condition \eqref{eq:boundary} and solving the ordinary differential equation (ODE) \eqref{eq:ode_s} at each moment $t\in [0,\infty)$\footnote{The differential equation \eqref{eq:ode_s} can be regarded as an ODE with respect to the variable $s$ with a given $t$.}, we get the desired control signal as
$$
\begin{aligned}
u(t,\cdot)  =u_s(t,\cdot)\Big|_{s=1} 
 =  u_d(t) + \int_{0}^{1}  K(\bar{\gamma}_x (t,\mu)) {\partial \bar{\gamma}_x(t,\mu) \over \partial \mu} d\mu.
\end{aligned}
$$

The implementation of the above controller design relies on calculating the mapping $\bar\gamma_x(t,\dot)$ numerically, which has a relatively heavy online computation burden. An alternative method is using the minimal geodesic $\gamma_{\tt m}(t,s)$ between $x(t)$ and $x_d(t)$ with the Riemannian metric $M(x)$. We have the following.

\begin{proposition}\rm
\label{prop:path_integral}
Consider the system \eqref{eq:NL} satisfying Assumption \ref{ass:sufficient}. For any $x_{d0}\in \rea^n$ and $u_d \in \cale$, the feedback controller
\begin{equation}
\label{control:path}
u = u_d(t) + \int_{0}^{1}  K(\bar{\gamma}_{\tt m} (x,x_d,\mu)) {\partial \bar{\gamma}_{\tt m}(x,x_d,\mu) \over \partial \mu} d\mu,
\end{equation}
exponentially achieve \eqref{eq:convergence}, where $K(\cdot)$ is defined in \eqref{diff_control2}, and the mapping $\gamma_{\tt m}$ is the minimal geodesic w.r.t. the metric $M(x)$ between $x(t)$ and $x_d(t)$, parameterized by $\mu \in [0,1]$.
\end{proposition}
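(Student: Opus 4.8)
The plan is to show that the closed-loop trajectory $x(t)$ under the feedback \eqref{control:path} has the target $x_d(t)$ as a particular solution, and that along the minimal geodesic $\bar\gamma_{\tt m}(x,x_d,\mu)$ connecting $x_d(t)$ to $x(t)$ the Riemannian energy $E(t) := \int_0^1 \big(\partial_\mu\bar\gamma_{\tt m}\big)^\top M(\bar\gamma_{\tt m})\,\partial_\mu\bar\gamma_{\tt m}\,d\mu$ decays exponentially. First I would verify the boundary/invariance property: when $x(t_0)=x_d(t_0)$ the geodesic degenerates to a point, the integral term in \eqref{control:path} vanishes, so $u=u_d(t)$ and hence $x(t)\equiv x_d(t)$; this establishes forward invariance of $\{x=x_d\}$ and the boundary condition \eqref{eq:boundary}. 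Next, for $x(t_0)\neq x_d(t_0)$, I would set up the family of curves: for each $\mu$, define $x(t,\mu)$ to be the solution of \eqref{eq:NL} starting at $\bar\gamma_{\tt m}(x(t_0),x_d(t_0),\mu)$, fed the control $u(t,\mu) = u_d(t) + \int_0^\mu K(x(t,\sigma))\,\partial_\sigma x(t,\sigma)\,d\sigma$, so that $\partial_\mu u(t,\mu) = K(x(t,\mu))\,\partial_\mu x(t,\mu)$ holds exactly and $u(t,1)$ recovers \eqref{control:path}; the displacement $\delta x(t,\mu) := \partial_\mu x(t,\mu)$ then satisfies the differential dynamics \eqref{diff_dyn} with $\delta u = K(x)\delta x$, i.e. exactly the closed-loop differential system of Proposition \ref{prop:passivity}.

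The core estimate then follows from Proposition \ref{prop:passivity}: with $V(x,\delta x) = \tfrac12\delta x^\top M(x)\delta x$ we have $\tfrac{d}{dt}V(x(t,\mu),\delta x(t,\mu)) \le -\lambda_0 V(x(t,\mu),\delta x(t,\mu))$ pointwise in $\mu$. Integrating over $\mu\in[0,1]$ gives $\dot E(t) \le -\lambda_0 E(t)$, hence $E(t)\le e^{-\lambda_0(t-t_0)}E(t_0)$. Since $\bar\gamma_{\tt m}(\cdot)$ is a \emph{minimal} geodesic, $E(t)$ equals the squared Riemannian distance $d_M(x(t),x_d(t))^2$ (constant-speed parametrization), so $d_M(x(t),x_d(t)) \le e^{-\lambda_0(t-t_0)/2}\,d_M(x(t_0),x_d(t_0))$; combining with the uniform bounds $2a_1 I \le M(x)\le 2a_2 I$ (equivalently $\underline p I \le M \le \bar p I$) yields $|x(t)-x_d(t)| \le \sqrt{a_2/a_1}\,e^{-\lambda_0(t-t_0)/2}|x(t_0)-x_d(t_0)|$, which is \eqref{eq:convergence} with exponential rate. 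One subtlety to address is that the curve actually realized by the flow, $\mu\mapsto x(t,\mu)$, need not remain a minimal geodesic for $t>t_0$; the standard resolution (as in \cite{MANSLO}) is that at each time we re-select the minimal geodesic and use the fact that the minimal energy can only decrease faster than the energy of any particular connecting curve, so the differential inequality $\dot E \le -\lambda_0 E$ for the minimal energy still holds in the appropriate (Dini) sense.

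The main obstacle is the regularity and well-posedness of the geodesic-based feedback: one must argue that the minimal geodesic $\bar\gamma_{\tt m}(x,x_d,\cdot)$ is smooth in $(x,x_d)$ (at least on the relevant region, excluding cut loci), so that $u$ in \eqref{control:path} is a well-defined locally Lipschitz feedback and the closed loop has unique solutions; and that the ``shrinking'' argument for the minimal energy is valid, i.e. that $E(t)=d_M(x(t),x_d(t))^2$ is locally Lipschitz in $t$ and its upper Dini derivative obeys $D^+E \le -\lambda_0 E$. Both points are handled by invoking completeness of the metric (uniform equivalence to the Euclidean one gives geodesic completeness) and Danskin-type/envelope arguments for the minimum over connecting curves, exactly along the lines of the CCM construction in \cite{MANSLO}; the remaining computations are the routine integration-over-$\mu$ steps already sketched above.
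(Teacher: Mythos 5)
Your argument is correct and follows exactly the route the paper intends: the paper's own proof is a one-line citation to Proposition~\ref{prop:passivity} together with item~(iii) of Theorem~1 in \cite{MANSLO}, and your proposal is a faithful unpacking of that argument (flow-of-curves construction, pointwise differential decay from Proposition~\ref{prop:passivity}, integration over $\mu$, Dini-derivative handling of the re-selected minimal geodesic, and the uniform bounds $2a_1 I\le M\le 2a_2 I$ to pass from Riemannian to Euclidean distance). No gaps; you correctly flag, and correctly resolve, the two standard subtleties (the flowed curve need not stay a geodesic; regularity away from the cut locus), both of which are exactly the points addressed in \cite{MANSLO}.
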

\begin{proof}
The proof follows Proposition \ref{prop:passivity} and the proof of the third item in \cite[Theorem 1]{MANSLO}.
\end{proof}

See \cite{MANSLO,WANMAN} for more details about implementation, and \cite{LEUMAN} for the online computation of the minimal geodesic. This step is openly recongnized as the heaviest computational step of online realization.

If we make a change of variable to the ODE \eqref{eq:ode_s}, we may get the PDE
\begin{equation}\label{eq:pde_k1}
  {\partial \alpha(x,t) \over \partial x} = K(x),
\end{equation}
where we fix $u=\alpha(x,t)$. We denote $K_i(\cdot)$ as the $i$-column of $K(\cdot)$. The equation \eqref{eq:pde_k1} is only solvable if and only if
\begin{equation}
\label{pde:k}
{\partial K_i(x) \over \partial x} = \left[{\partial K_i(x) \over \partial x}\right]^\top.
\end{equation}

We have the following corollary, which is trivial to prove but motivating to our new development in the next subsection.

\begin{corollary}\rm
\label{cor:poincare}
Consider the system \eqref{eq:NL} satisfying Assumption \ref{ass:sufficient}. For any $x_{d0}\in \rea^n$ and $u_d \in \cale$, if we can find a smooth function $\gamma(x)$ guaranteeing \eqref{pde:k}, then the feedback controller
\begin{equation}
\label{control:poincare}
u = u_d(t) + \beta(x) - \beta(x_d)
\end{equation}
exponentially achieves \eqref{eq:convergence}, where $K(\cdot)$ is defined in \eqref{diff_control2}, and
$$
\alpha(x) = \int_{0}^x K(\chi)d\chi.
$$
\end{corollary}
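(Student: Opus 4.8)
The plan is to reduce Corollary~\ref{cor:poincare} to Proposition~\ref{prop:path_integral} by showing that, when \eqref{pde:k} holds, the path-integral controller \eqref{control:path} simplifies to the closed-form expression \eqref{control:poincare}. First I would invoke the Poincar\'e lemma: condition \eqref{pde:k} states that each column $K_i(x)$, viewed as a differential $1$-form on $\rea^n$, is closed, hence (since $\rea^n$ is simply connected) exact. Thus there exists a smooth potential $\beta:\rea^n\to\rea^m$ whose $i$-th component has gradient $K_i(x)$, i.e. $\frac{\partial \beta(x)}{\partial x}=K(x)$; concretely one may take the line integral $\beta(x)=\int_0^1 K(\sigma x)\,x\,d\sigma$, or the iterated-integral form written in the statement as $\int_0^x K(\chi)\,d\chi$. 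This is precisely the solvability statement already recorded around \eqref{eq:pde_k1}--\eqref{pde:k}, so the content here is just to name the primitive $\beta$.

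Next I would verify that $u=\alpha(x,t):=u_d(t)+\beta(x)-\beta(x_d)$ satisfies the two structural requirements used in the proof of Proposition~\ref{prop:path_integral}. The boundary condition \eqref{eq:boundary} is immediate: at $x=x_d$ we get $\alpha(x_d,t)=u_d(t)$, so $x=x_d$ is an invariant solution. For the differential behaviour, differentiating along any displacement gives $\delta u = \frac{\partial \alpha}{\partial x}\delta x = \frac{\partial\beta(x)}{\partial x}\delta x = K(x)\delta x$, which is exactly the differential feedback \eqref{diff_control2} that Proposition~\ref{prop:passivity} shows renders $\frac{d}{dt}V(x,\delta x)\le -\lambda_0 V(x,\delta x)$ along closed-loop solutions. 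Since the path-integral in \eqref{control:path} is, by exactness, independent of the path joining $x_d$ to $x$ and evaluates to $\beta(x)-\beta(x_d)$ regardless of whether one integrates along the minimal geodesic $\bar\gamma_{\tt m}$ or any other curve, the controller \eqref{control:poincare} coincides with \eqref{control:path}; hence the conclusion follows verbatim from Proposition~\ref{prop:path_integral}.

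To finish, I would assemble the exponential convergence claim: by Proposition~\ref{prop:passivity} the differential dynamics contract in the metric $M(x)$, so $\sqrt{V(x,\delta x)}$ decays at rate $\lambda_0/2$; integrating the infinitesimal contraction along the geodesic connecting $x(t)$ and $x_d(t)$ (as in the proof of \cite[Theorem~1]{MANSLO}) upper-bounds the Riemannian distance $d(x(t),x_d(t))$ by $k\,e^{-\lambda_0 t/2}d(x_0,x_{d0})$, and the two-sided bound $2a_1 I_n\le M(x)\le 2a_2 I_n$ converts this into exponential decay of $|x(t)-x_d(t)|$, giving \eqref{eq:convergence}. The only mild subtlety I anticipate is bookkeeping the fact that \eqref{control:path} was stated with integration along $\bar\gamma_{\tt m}$, so one must explicitly note that closedness of $K_i$ makes the integral path-independent --- but this is a one-line appeal to Stokes/Poincar\'e rather than a genuine obstacle; everything substantive has already been done in Propositions~\ref{prop:passivity} and~\ref{prop:path_integral}.
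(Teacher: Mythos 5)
Your proof is correct, and it takes a route that differs mildly but genuinely from the paper's. The paper does \emph{not} reduce the corollary to Proposition~\ref{prop:path_integral}: it instead observes directly that along the $s$-parametrized family of solutions one has $\delta u_s = K(X(t,\bar\gamma(s),u_s))\,\delta x_s$ (because $\partial\beta/\partial x = K$), applies Proposition~\ref{prop:passivity} to get $\tfrac{d}{dt}\bar V(t,s)\le -\lambda_0\bar V(t,s)$ for the differential storage function, and then invokes the Finsler--Lyapunov results of \cite{FORSEP} together with the invariance of $\{x=x_d\}$ to conclude IES. You instead argue path-independence: since \eqref{pde:k} makes each $K_i$ a closed $1$-form on $\rea^n$, the Poincar\'e lemma gives a primitive $\beta$, and then the line integral in \eqref{control:path} is path-independent, so the geodesic integral collapses to $\beta(x)-\beta(x_d)$; the result then follows verbatim from Proposition~\ref{prop:path_integral}. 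Your reduction is clean, makes the word ``corollary'' literal, and has the small bonus of not requiring you to re-verify the Forni--Sepulchre integration step, since Proposition~\ref{prop:path_integral} already did it. The paper's direct argument is marginally more self-contained (it does not presume that the geodesic integral in \eqref{control:path} is well posed, which requires a complete metric), but both are sound. One bookkeeping note: the corollary statement uses $\alpha(x)=\int_0^x K(\chi)\,d\chi$ but then writes $\beta$ in the control law---you correctly treated these as the same primitive; and the ``smooth function $\gamma(x)$'' in the hypothesis refers to the free damping gain of Proposition~\ref{prop:passivity}, whose choice affects whether $K=-[\gamma(x)+\gamma_0]\calr(x)[M(x)B(x)]^\top$ satisfies \eqref{pde:k}; you did not need to dwell on this, but it is worth keeping in mind that the hypothesis of the corollary is about \emph{existence of a $\gamma$ making $K$ closed}, not about $K$ being closed for an arbitrary $\gamma$.
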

\begin{proof}
The PDE \eqref{pde:k} guarantees the existence of $\beta(x)$. The infinitesimal displacement $\delta u_s$ at $X(t,\bar{\gamma}(s), u_s)$ is
$$
\delta u_s = K(X(t,\bar{\gamma}(s), u_s))\delta x_s.
$$
Selecting the differential Lyapunov function $\bar V(t,s) \coloneqq V(X(\cdot),\delta x_s) = \delta x_s^\top M(X(\cdot))\delta x_s$, the time derivative of which satisfies
$$
{d\over dt} \bar V(t,s) \le -\lambda_0 \bar V(t,s),
$$
since the mapping $K(\cdot)$ is constructed following Proposition \ref{prop:passivity}. Invoking the main results in \cite{FORSEP}, we conclude the incremental exponential stability of the closed-loop system under the controller \eqref{control:poincare}. Note the invariance of $x = x_d$, we achieve the universal tracking task \eqref{eq:convergence}.
\end{proof}

\vspace{0.1cm}

\begin{remark}\rm 
The selection of damping injection provides an additional degree of freedom to make the mapping $K$ satisfy \eqref{pde:k}. When the Riemannian metric $P(x)$ and the input matrix $g(x)$ are independent of state $x$, it is simple to complete controller design. However, it is a daunting task to guarantee \eqref{pde:k} in general cases.
\end{remark}

\subsection{The Controller Design}

The last step is the construction of tracking controller $u=\alpha(x,t)$ from the obtained differential feedback $\delta u$, which is solvable if and only if
$
\nabla K_i (x) = [\nabla K_i(x)]^\top
$
with $K_i(\cdot)$ as the $i$-column of $K(\cdot)$. Here, we propose an alternative method to (locally) realize the proposed differential controller. Indeed, the above-mentioned PDE is widely adopted in nonlinear observer design and adaptive control, see \cite{KARetal} for a recent review. We have the following. 
\begin{proposition}
\label{prop:dyn_impl}\rm
Consider the system \eqref{eq:NL} and the target system \eqref{eq:target} satisfying Assumption \ref{prop:passivity} with $x_d \in \call_\infty$. The system \eqref{eq:NL} is contracting under the dynamic feedback law
\begin{equation}
\label{eq:z}
   \dot{z} = f(x) + B(x) u - \ell (z -x), \quad u =   u_d + \beta(x,z) - \beta(x_d,z) 
\end{equation}
with $\ell>0$, $z(0),x_0 \in \mathcal{B}_\varepsilon(x_{d0})$ for $\varepsilon$ smaller than some $\varepsilon_\star >0$ and
$$
 \beta(x,z)  =  \int_{0}^{x_1} K_1(\mu, z_2, \ldots, z_n)d\mu + \ldots
+ \int_{0}^{x_n} K_m( z_1, \ldots, z_{n-1}, \mu)d\mu,
$$
thus achieving the task \eqref{eq:convergence}.
\end{proposition}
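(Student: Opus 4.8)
The plan is to show that the extended closed loop \eqref{eq:z}, viewed as a system in $\chi:=\col(x,z)$, is incrementally exponentially stable on a tube around the reference, and then to obtain \eqref{eq:convergence} by noting that, when $x_0=x_{d0}$, the pair $(x_d,\zeta)$ -- with $\zeta$ the solution of $\dot\zeta=f(x_d)+B(x_d)u_d-\ell(\zeta-x_d)$, $\zeta(0)=z(0)$ -- is itself a solution of \eqref{eq:z}. I would begin by recording three elementary facts. (a) Since $\beta(x,z)$ is built by integrating one coordinate at a time, $\partial_x\beta(x,z)=:\tilde K(x,z)$ is smooth with $\tilde K(x,x)=K(x)$, and $\partial_z[\beta(x,z)-\beta(x_d,z)]=:\Delta$ is smooth and vanishes at $x=x_d$; hence on a fixed ball $\|\tilde K(x,z)-K(x)\|=O(|z-x|)$ and $\|\Delta\|=O(|x-x_d|)$. (b) When $x=x_d$ the law in \eqref{eq:z} reduces to $u=u_d$, so $\{x=x_d\}$ is forward invariant and $(x_d,\zeta)$ is a genuine closed-loop trajectory; moreover the filter error $e:=z-x$ and $\zeta-x_d$ each obey $\dot w=-\ell w$, so for data in $\calb_\varepsilon(x_{d0})$ one has $|e(t)|,|\zeta(t)-x_d(t)|\le 2\varepsilon e^{-\ell t}$. (c) With $x_d\in\call_\infty$ and $u_d\in\call_\infty^m$, $\zeta\in\call_\infty$, so once the initial tube is small all signals stay in a fixed compact ball.

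I would then compute the variational dynamics of \eqref{eq:z}. In the coordinates $(\delta x,\delta e)$, $\delta e:=\delta z-\delta x$, it decouples as $\delta\dot e=-\ell\,\delta e$ and $\delta\dot x=A(x,u)\delta x+B(x)\delta u$ with $\delta u=K(x)\delta x+\eta$, where $\eta:=[\tilde K(x,z)-K(x)+\Delta]\delta x+\Delta\,\delta e$. Take the differential Lyapunov function $\calv:=\hal\delta x^\top M(x)\delta x+\hal\sigma|\delta e|^2$, $\sigma>0$ to be fixed. Using the PDEs \eqref{pde:1} to eliminate the explicit $u$-dependence of $\dot{\calv}$ exactly as in the proof of Proposition~\ref{prop:passivity}, the $K(x)\delta x$ part of $\delta u$ contributes at most $-\lambda_0(\hal\delta x^\top M\delta x)$ by Proposition~\ref{prop:passivity}, and the remainder $\delta x^\top M(x)B(x)\eta$ is, by fact (a), bounded on the fixed ball by $c_0\,g(t)(|\delta x|^2+|\delta e|^2)$ with $g:=|e|+|x-x_d|$. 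A Young-type split and a small enough $\sigma$ then give $\dot{\calv}\le -c\,\calv$ with $c>0$ -- any $\ell>0$ being admissible -- at every point where $g\le\rho$, for a threshold $\rho>0$ fixed by $\lambda_0,\underline p$ and the ball constants; in particular this holds along every minimal geodesic of $\diag(M(x),\sigma I)$ joining $\col(x,z)$ to $\col(x_d,\zeta)$ as soon as $|x-x_d|+2\varepsilon\le\rho$, using fact (b) to bound $e$ and $\zeta-x_d$ along the connecting arc.

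The final step is a standard bootstrap. Pick $\varepsilon_\star>0$ with $(2+c_1)\varepsilon_\star<\rho$, where $c_1$ relates $\calv$ to squared Euclidean distance. On the maximal interval on which $g(t)<\rho$, integrating $\dot{\calv}\le-c\calv$ along minimal geodesics (in the spirit of \cite{FORSEP}) gives $|x(t)-x_d(t)|+|z(t)-\zeta(t)|\le c_1e^{-ct/2}|x_0-x_{d0}|\le c_1\varepsilon_\star$, hence $g(t)\le(2+c_1)\varepsilon_\star<\rho$; the strict inequality forbids the interval from ending, so it is $[0,\infty)$ (forward completeness following from the resulting boundedness). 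Consequently \eqref{eq:z} is contracting on the tube and $|x(t)-x_d(t)|\to0$, i.e.\ \eqref{eq:convergence} holds.

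I expect the bootstrap to be the main obstacle: the perturbation gain in $\dot{\calv}$ is proportional to $|x-x_d|$, the quantity one is trying to drive to zero, so the estimate must be closed by a maximal-interval / small-gain argument -- which is exactly why the result is local (small $\varepsilon_\star$) and requires $x_d\in\call_\infty$. A secondary technicality is that the minimal geodesics of $\diag(M(x),\sigma I)$ must be shown to remain in the region where $\dot{\calv}\le-c\calv$; this follows from the uniform equivalence of that metric with the Euclidean one on the fixed ball together with fact (b).
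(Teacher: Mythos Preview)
Your argument is essentially correct and, modulo one slip, complete; but it follows a genuinely different route from the paper's. The paper gives only a sketch and argues \emph{by cascade}: since $e:=z-x$ obeys $\dot e=-\ell e$ identically, $\Delta(x,z):=\hat K(x,z)-K(x)$ (with $\hat K:=\partial_x\beta$) is an exponentially decaying time signal along any closed-loop trajectory; the paper then looks only at the $x$--variational system $\delta\dot x=[A(x,u)+B(x)(K(x)+\Delta(x,z))]\delta x$, treats it as the nominal differential closed loop of Proposition~\ref{prop:passivity} perturbed by a vanishing term, and invokes standard perturbation results \cite[Ch.~9]{KHA} together with a converse Lyapunov step and \cite{FORSEP}. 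In particular, the paper never writes a composite Lyapunov function and does not track $\delta z$ explicitly. Your approach instead builds the full $(\delta x,\delta e)$ variational system, uses the exact decoupling $\delta\dot e=-\ell\delta e$, and closes a composite differential Lyapunov inequality followed by a bootstrap/maximal-interval argument. This is more self-contained, makes the role of $x_d\in\call_\infty$ and the locality in $\varepsilon$ completely explicit, and avoids the somewhat informal step in the paper where the dependence of $z$ on $x$ is suppressed when writing $\partial\alpha/\partial x$.

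One small correction: in your composite Lyapunov step you write ``a small enough $\sigma$''. The cross term $\delta x^\top M B\,\Delta\,\delta e$ is bounded by $C\rho\,|\delta x|\,|\delta e|$ on the region $g\le\rho$, and after a Young split the $|\delta e|^2$ coefficient becomes $-\sigma\ell+\tfrac{(C\rho)^2}{\lambda_0\underline p}$; for negativity you need $\sigma$ \emph{large enough} (equivalently, $\rho$ small enough for a fixed $\sigma$). With that sign fixed, your ``any $\ell>0$ admissible'' claim and the rest of the bootstrap go through as written.
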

\begin{proof}
We give the sketch of proof. The dynamic extension \eqref{eq:z} is an IES system with $z=x$ a particular solution. If the system \eqref{eq:NL} is forward complete, then we have $\lim_{t\to\infty} |z(t) - x(t)| =0$. For convenience, we denote the feedback law in \eqref{eq:z} as 
$$
\alpha(x,t) \coloneqq u_d(t) + \beta(x,z(t)) - \beta(x_d(t),z(t)).
$$
 We then have
\begin{equation}\label{eq:partial}
\begin{aligned}
  {\partial \alpha \over \partial x}
  & =
  {\partial \beta(x,z) \over \partial x} 
   =
  \bigg[
  \begin{aligned}
  K_1(x_1, z_2, \ldots, z_n)~~ \Big|&~ \ldots&\Big| K_m(z_1,\ldots, z_{n-1}, x_n)
  \end{aligned}
  \bigg]
=:
  \hat{K}(x,z).
\end{aligned}
\end{equation}
Define
\begin{equation}
\label{eq:Delta}
\begin{aligned}
\Delta(x,z)  \coloneqq &\hat{K}(x,z) - K(x)
\end{aligned}
\end{equation}
satisfying
$
\Delta(x,x) =0.
$
If the closed-loop system \eqref{eq:NL} is forward complete, which can be shown for small $\varepsilon>0$, we have $\lim_{t\to\infty} \Delta(x(t),z(t)) = 0$ exponentially.

We now prove the contraction property of the closed-loop system by investigating its differential system along the solution $x =X(t,\bar\gamma(1),\alpha(x,t))$, which is
\begin{equation}
\label{diff_dyn_pert}
\delta \dot{x} = [A(x,u) + g(x)(K(x)+ \Delta(x,z))]\delta x.
\end{equation}
Invoking Proposition \ref{prop:passivity}, the above differential system can be regarded as an exponentially stable LTV system perturbed by a term $g(x)\Delta(x,z) \delta x$. Assuming that $x_0 \in B_\varepsilon(x_{d0})$ and $x_d(t) \in \call_\infty$ with small $\varepsilon>0$, we have $x(t) \in \call_\infty$. Therefore, the perturbation $g(x) \Delta(x,z) \delta x$ is an exponentially decaying term, and we conclude the exponential stability of \eqref{diff_dyn_pert} with some basic perturbation analysis \cite[Chapeter 9]{KHA}. Using the inverse Lyapunov theorem and \cite[Theorem 1]{FORSEP}, we are able to prove the (locally) IES of the control system \eqref{eq:NL} under the proposed feedback law.
\end{proof}
\section{Examples}
\label{sec:5}
%
\subsection{A Numerical Example}
In this subsection we consider a simple numerical example to verify the results in Section \ref{sec:43}, showing the relatively large domain of attraction. Consider the system
\begin{equation}
\label{eq:numerical}
\begin{aligned}
  \dot{x}_1 & = {1\over3}x_2^3 + x_2 \\
  \dot{x}_2 & = -x_2 + u.
\end{aligned}
\end{equation}
We may get the metric as $M = \begmat{3 & -1 \\ -1 & 2}$ with the differential controller $\delta u= K(x) \delta x$, where $K(x) = [-(x_2^2+1)~ ~-x_2^2]$. Constructing the dynamic extension and following the results in Subsection \ref{sec:42}, we may get the feedback law as
$$
u = - (z_2^2+1)(x_2) + (x_{d,2}^2 + 1)x_{d,2} - {1\over 3}(x_2^3 - x_{d,2}^3) + u_d(t).
$$
We compare it with the controller \eqref{control:path} by path integral via simulations. The initial conditions are $x_{d0}=[3~-1]^\top$, $x_0 = [-5 ~2]^\top$ and $z(0)=[0~0]^\top$, with $\ell =5$ and $u_d = \sin(t) - \cos(t)^2 x_{d,1}(t)$. We show the simulation result in Fig. \ref{fig:num}, where both the methods achieve IES. As expected, the proposed method has a larger overshoot at the beginning due to the dynamic extension, but reducing the online computation burden. We also test the controller with different initial conditions, illustrating that the domain of attraction is relatively large.

\begin{figure}[h]
    \centering
    \includegraphics[width=0.6\textwidth]{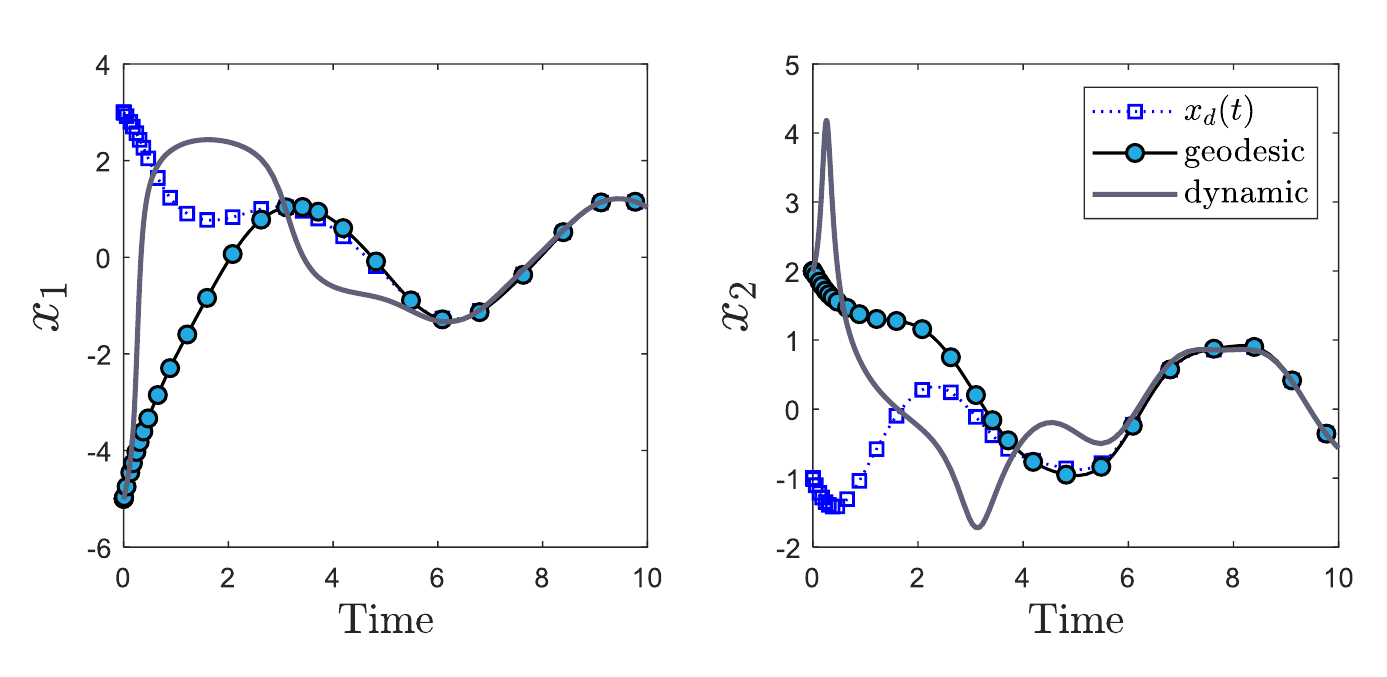}
    \includegraphics[width=0.3\textwidth]{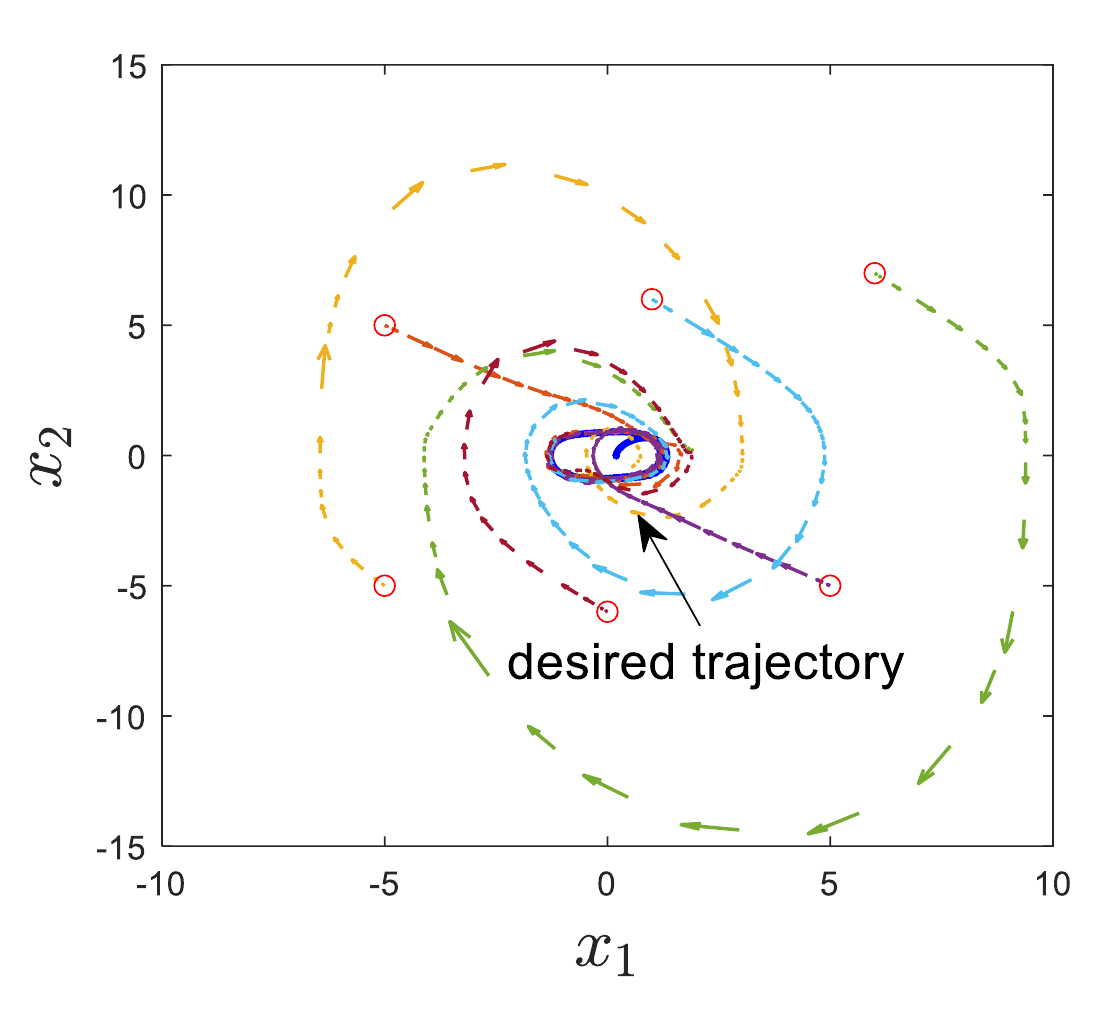}\caption{Simulation results of the numerical example}
    \label{fig:num}
\end{figure}

\subsection{Electrostatic Microactuator}
To illustrate the results, let us consider the problem of position tracking of the electrostatic microactuator, the model of which is given by \cite{MAIetal}
\begin{equation}
\label{EM_syst}
\begmat{\dot q\\ \dot p \\ \dot Q}
=
\begmat{{1 \over m} p \\ - k(q -1) - {1\over 2A\epsilon} Q^2 - {b\over m}p \\ -{1\over R A\epsilon}qQ+ {1\over R}u },
\end{equation}
and we denote $x:=\col(q,p,Q)$ representing the air gap, the momentum and the charge of the device. The systems state is defined on $\{(q,p,Q) \in \rea^3~ |~ 0\le q\le2, Q\ge 0\}$ due to physical constraints. Solving the inequality \eqref{eq:ccm}, we get a feasible solution 
$$
M^{-1} = \begmat{
{1\over 2bk} + {b^2 +km \over 2bk}& -{m\over 2}& 0\\
-{m \over 2} & {km^2 \over 2b} + {m \over 2b} & 0\\
0 & 0 & 1
},
$$
which is positive definite uniformly in the parameters $k>0, m>0$ and $b>0$. Noting that such example runs in a bounded state space, we can simply use a constant $\gamma>0$ for trajectory tracking. We give the simulation results in Fig. \ref{fig:EM} with normalized parameters $m=1,k=1,b=2,R=1$,$A = 3$ and $\epsilon = {1\over 2}$, and $x_{d0}=[0.2~0~0]^\top$ and $x_0=[1.5~1~2]^\top$. The control input of the target dynamics is selected as $u_d = \hal |\sin({1\over 5}t) + \cos(t)|$, and we fix $\gamma =2$. The simulation results validate the theoretical part.

\begin{figure}[h]
    \centering
    \includegraphics[width=0.8\textwidth]{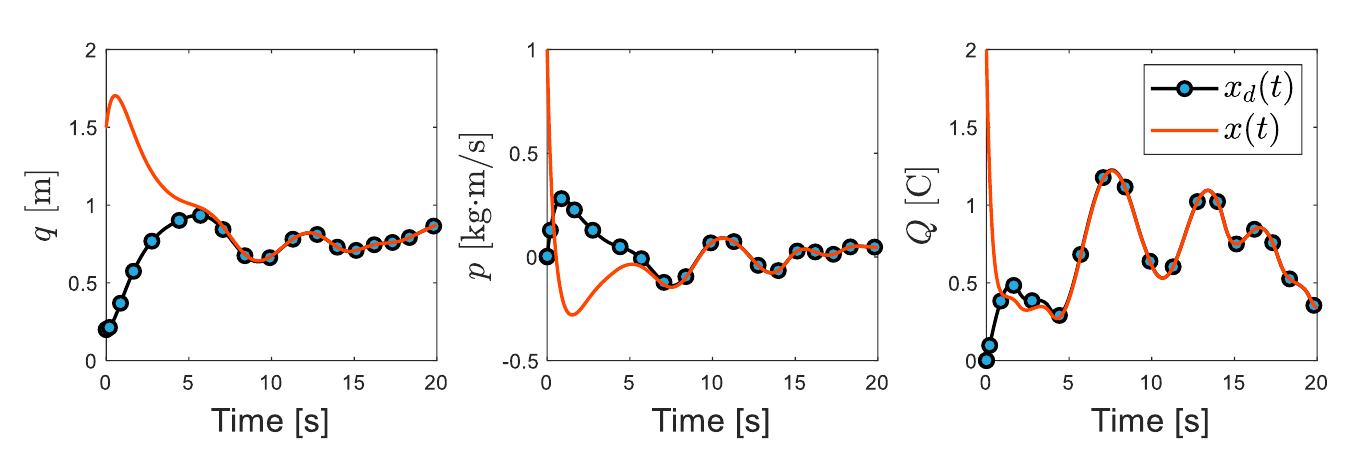}\caption{Simulation results for the model of electrostatic microactuator}
    \label{fig:EM}
\end{figure}

\section{Concluding Remarks}
\label{sec:6}
In this paper we have studied the necessary conditions of the systems which can achieve trajectory tracking with different cases, including universal asymptotic tracking, with dynamic extension and robust case. The invariance of CCMs under dynamic extension is clarified. We also show that the proposed differential detectability condition is intuitive for tracking controller design. The extensions in the following directions are of interests: 1) it is of practical interests to modify the results in Proposition \ref{prop:dyn_impl} in order to get a semi-global design; and 2) in this paper, we limit our attentions to the general nonlinear systems in the form \eqref{eq:NL}. For the systems with specific structures, it is promising to get more systematic constructive solutions.

%


\Addresses

\end{document}